\documentclass[10pt, twocolumn, letterpaper]{article}
\usepackage[margin=0.75in, columnsep=0.2in]{geometry}
\usepackage{graphicx}
\usepackage{amsmath, amssymb, amsfonts, amsthm}
\usepackage{booktabs}
\usepackage{authblk}
\usepackage{microtype}
\usepackage[numbers,sort&compress]{natbib}
\usepackage[colorlinks=true, linkcolor=blue, citecolor=blue, urlcolor=blue]{hyperref}

\newtheorem{proposition}{Proposition}
\newtheorem{lemma}{Lemma}
\newtheorem{corollary}{Corollary}
\newcommand{\Dg}{\mathcal{D}}
\newcommand{\GD}{\mathrm{GD}}
\newcommand{\ex}{\mathrm{ex}}
\newcommand{\loc}{\mathrm{loc}}

\title{Splicing an exact local self-energy into $GW$:\\
\large tests on small Hubbard clusters}
\author[1]{Michael O. Atambo\thanks{michael.atambo@tukenya.ac.ke}}
\affil[1]{Department of Physics, Earth and Environmental Science,
Technical University of Kenya, Nairobi, Kenya}
\date{\today}

\begin{document}

\twocolumn[{%
\maketitle
\begin{abstract}
$GW$+DMFT-type schemes replace the local part of a $GW$ self-energy by a separately computed local self-energy. A causal double counting removes a non-causal term from the hybridization (Lee and Haule, 2017), but the resulting total self-energy has been flagged as not guaranteed to be causal (Zhu and Chan, 2021). From the Herglotz representation, the spliced self-energy is causal if and only if its spectral measure is positive semi-definite. For translation-invariant single-band systems this becomes a momentum-resolved bound: the local weight may fall short of the local $GW$ weight by at most the momentum minimum of the $GW$ weight. On finite clusters, a mismatch between $GW$ and exact poles generically produces indefinite residues. On the Matsubara axis, a negative damping eigenvalue at one frequency certifies non-causality; passing does not establish causality. We examine these diagnostics in an idealized setting, not self-consistent $GW$+DMFT: one-shot $GW$ on the exact Green's function of small Hubbard clusters with the exact local self-energy inserted. At all 528 parameter sets, our computed $GW$ and exact self-energies pass both tests, as expected. The splice fails the pointwise test at 54 points, none at half filling, all at the lowest Matsubara frequency, and already at $U/t=0.5$; there the local $GW$ damping is 1.3-9 times the exact one. A block Pick test on eight frequencies, run with a second imaginary-time code, flags 481 of the 528 sets; for the other 47 it is inconclusive. Moment checks miss all of these, because the spliced high-frequency moment is exact. The Dyson Green's function is then non-causal too, yet passes the pointwise test everywhere tested. This version replaces a withdrawn one whose reported causality violations of $GW$ itself came from a sign error in the polarization; with the sign corrected, the computed $GW$ shows the expected positivity.
\end{abstract}
\vspace{2em}
}]

% ================================================================
\section{Introduction}
\label{sec:intro}

The $GW$ approximation~\cite{hedin65, strinati80, strinati82, hybertsen86, strinati88, aryasetiawan98, onida02, reining18, golze19} and dynamical mean-field theory (DMFT)~\cite{metzner89, georges92, georges96, kotliar06, held07} describe complementary physics: $GW$ captures nonlocal, dynamically screened exchange~\cite{strinati82, strinati84}, while DMFT treats local correlations nonperturbatively. Combining them, with dynamically screened local interactions~\cite{aryasetiawan04}, as in $GW$+DMFT, $GW$+EDMFT and related schemes~\cite{sun02, biermann03, sun04, ayral12, ayral13, tomczak12, biermann14, huang14, vanroekeghem14, tomczak15, choi16, boehnke16, nilsson17, golez17, choi19, zhu21}, self-energy embedding theory~\cite{zgid17, lan17, iskakov20}, and vertex-based extensions of DMFT~\cite{toschi07, katanin09, rubtsov08, rubtsov12, ayral15, stepanov19, rohringer18}, has become a central route to the electronic structure of correlated materials. Many of these constructions share one step: a nonlocal self-energy is built diagrammatically, and its local part is replaced by a local self-energy obtained elsewhere.

Causality is the natural question to ask of such a construction. For the exact self-energy $\Sigma$, the function $-\Sigma$ is a matrix-valued Herglotz function~\cite{luttinger61, gesztesy00}. Diagrammatic approximations that can be written as ``perfect squares'' on the Keldysh contour, $GW$ among them, inherit this property~\cite{stefanucci14, uimonen15}. This is distinct from being conserving ($\Phi$-derivable)~\cite{baym61, baym62, almbladh99}: approximations outside the perfect-square class, such as $GW$ with first-order vertex corrections, can give negative spectral functions~\cite{stefanucci14}. A self-energy assembled from pieces of different approximations carries no such guarantee. Non-causality has so far been reported mainly in \emph{auxiliary} quantities. Lee and Haule found that $GW$+DMFT for the H$_2$ molecule produces a non-causal hybridization in the strongly correlated regime~\cite{lee17, leethesis17}. They traced it to a mismatch between the local projection of the $GW$ self-energy and the double-counting term, and removed it with a ``causal double counting'' that subtracts exactly the local projection of $GW$. Non-causal impurity interactions $\mathcal{U}(\omega)$ are well known in $GW$+EDMFT~\cite{nilsson17, chen22}. Backes, Sim and Biermann showed that the Dyson-equation construction of the bath is non-causal in general and replaced it by a cavity construction that is causal by design~\cite{backes22}. Causality problems of periodized and interpolated cluster self-energies are an older theme of cluster DMFT~\cite{hettler98, lichtenstein00, kotliar01, hettler00, biroli02, biroli04, maier05, stanescu06, pelz26}. In these works, the physical lattice Green's function and self-energy were found or assumed to remain causal~\cite{nilsson17, chen22, lan17}.

The total self-energy itself has received much less attention. Zhu and Chan, who subtract the local $G_0W_0$ self-energy and insert an impurity self-energy, note explicitly that, with one-shot $G_0W_0$, the result is not guaranteed to be strictly causal, use bath optimization and a small broadening to avoid non-causal behavior, and report that no low-energy negative spectral weight was observed in their calculations~\cite{zhu21}. Lee and Haule's argument for their causal double counting concerns the hybridization in the limit of a large impurity self-energy~\cite{lee17, leethesis17}. With that double counting, their total self-energy has the same algebraic structure as the splice studied here, although their calculation is otherwise different (Sec.~\ref{sec:method}).

In this paper we examine the question for the spliced self-energy itself, in a controlled setting. We work on small Hubbard clusters, where every ingredient is available exactly. We insert the \emph{exact} local self-energy, so no impurity-solver or bath error can be blamed, and evaluate $GW$ on the exact Green's function. The results are:
(i)~a spectral-positivity condition for the spliced self-energy, which follows directly from the Herglotz representation, together with three consequences: a momentum-resolved bound for translation-invariant systems, a generic failure mechanism on finite clusters, and a Matsubara-axis (pointwise) form that is sufficient in general and necessary and sufficient at each frequency for translation-invariant clusters and single-band lattices (Sec.~\ref{sec:criterion});
(ii)~a symmetry argument showing that the half-filled particle-hole symmetric dimer passes this test at every Matsubara frequency (Sec.~\ref{sec:criterion});
(iii)~a short argument that the spliced high-frequency moment is always the exact, positive semi-definite local moment, so moment checks are blind to the failure (Sec.~\ref{sec:tail});
(iv)~an exact-diagonalization survey of 528 parameter sets in which $GW$ always passes, while the splice fails the pointwise test at 54 points and an eight-frequency block Pick test at 481, the latter evaluated with a second implementation (Sec.~\ref{sec:results}).
The setting is deliberately idealized: one-shot $GW$ on the exact cluster Green's function with the exact local self-energy inserted, not a self-consistent $GW$+DMFT calculation.
Section~\ref{sec:discussion} discusses the consequences for production $GW$+DMFT calculations.

% ================================================================
\section{Model, self-energies and the causality test}
\label{sec:method}

\subsection{Clusters and exact solution}
We consider the Hubbard model~\cite{hubbard63} on finite clusters,
\begin{equation}
H = -t\!\!\sum_{\langle ij\rangle\sigma}\!(c^\dagger_{i\sigma}c_{j\sigma}+\mathrm{h.c.})
+ U\sum_i d_{i\uparrow}d_{i\downarrow} - \mu N,
\label{eq:ham}
\end{equation}
with $d_{i\sigma}=n_{i\sigma}-\tfrac12$, so that $\mu=0$ is half filling on bipartite clusters. We study the dimer, the open three-site chain, the three-site ring and the four-site ring at $t=1$. We diagonalize $H$ in the full Fock space (16, 64, 64 and 256 states) and evaluate the exact Green's function $G$ from its Lehmann representation. The exact self-energy is $\Sigma^{\ex}(i\omega_n)=i\omega_n I - h - G^{-1}(i\omega_n)$, with $h$ the one-body matrix including $-\mu$. The Hubbard dimer is a standard test bed for exactly this kind of question~\cite{carrascal15, romaniello09, romaniello12, lee17}.

\subsection{$GW$ on the exact Green's function}
We evaluate one-shot $GW$ on the exact $G$. The polarization (spin factor 2), screened interaction and self-energy are
\begin{align}
P_{ij}(i\nu_m) &= \frac{2}{\beta}\sum_n G_{ij}(i\omega_n+i\nu_m)\,G_{ji}(i\omega_n), \label{eq:P}\\
W &= (1-UP)^{-1}U, \label{eq:W}\\
\Sigma^{GW}_{ij}(i\omega_n) &= \Sigma^{H}\delta_{ij} \nonumber\\
&\quad -\frac{1}{\beta}\sum_m G_{ij}(i\omega_n+i\nu_m)\,[W-U]_{ij}(i\nu_m). \label{eq:sgw}
\end{align}
The product in Eq.~\eqref{eq:sgw} is element-wise in the site basis. The sign of Eq.~\eqref{eq:P} is fixed by requiring $P$ to reproduce the exact density response at $U=0$ (agreement to $2.5\times10^{-4}$ on the ring), so that $P(0)\preceq0$ and $W$ is screened. Static terms such as the Hartree shift $\Sigma^H$ are Hermitian and play no role in what follows. Evaluating $GW$ on the exact $G$ isolates the effect of the splice from that of self-consistency, whether full~\cite{holm98}, quasiparticle~\cite{vanschilfgaarde06} or vertex-corrected~\cite{shishkin07, romaniello09, romaniello12}. Numerical details, including an analytic correction of the truncated Matsubara sum in Eq.~\eqref{eq:P}, are given in App.~\ref{app:numerics}.

\subsection{The spliced self-energy}
Let $\Dg$ denote the projection onto the site-diagonal. The spliced self-energy is
\begin{equation}
\Sigma^{\GD} = \Sigma^{GW} + \Dg\big[\Sigma^{\loc} - \Sigma^{GW}\big],
\label{eq:splice}
\end{equation}
with $\Sigma^{\loc}=\Dg\,\Sigma^{\ex}$ the exact local self-energy. Equation~\eqref{eq:splice} is the cluster analogue of
$\Sigma(\mathbf k)=\Sigma^{GW}(\mathbf k)-\Sigma^{GW}_{\loc}+\Sigma_{\mathrm{imp}}$,
i.e.\ $GW$+DMFT with the causal double counting $\Sigma_{\mathrm{DC}}=\hat P\Sigma^{GW}$ of Ref.~\cite{lee17} and the local $G_0W_0$ subtraction of Ref.~\cite{zhu21}. The comparison is at the level of the subtraction only: unlike those calculations, we insert the exact local self-energy instead of an impurity self-energy and evaluate $GW$ once, on the exact Green's function.

\subsection{Causality on the Matsubara axis}
A causal self-energy has the representation
$\Sigma(z)=\Sigma_\infty+\int d\varepsilon\,B(\varepsilon)/(z-\varepsilon)$ with Hermitian $\Sigma_\infty$ and $B(\varepsilon)\succeq0$~\cite{luttinger61, strinati88, gesztesy00}. Define the damping matrix at $\omega_n>0$~\cite{stefanucci13},
\begin{equation}
\Gamma(i\omega_n) \equiv -\frac{\Sigma(i\omega_n)-\Sigma(i\omega_n)^\dagger}{2i}
= \omega_n\!\int\! d\varepsilon\,\frac{B(\varepsilon)}{\omega_n^2+\varepsilon^2}.
\label{eq:gamma}
\end{equation}
For a real Hamiltonian $\Sigma_{ij}=\Sigma_{ji}$, so $\Gamma=-\mathrm{Im}\,\Sigma$ element-wise, and $\Sigma(-i\omega_n)=\Sigma(i\omega_n)^\dagger$, so positive frequencies suffice. Following the many-body literature~\cite{lee17, backes22, nilsson17}, ``causal'' here means that $-\Sigma$ has the matrix Herglotz property: $\Sigma$ is analytic in the upper half-plane and has a positive semi-definite spectral function. The spliced self-energy studied below is always analytic in the upper half-plane and obeys the Kramers-Kronig relations, since each of its ingredients does; what can fail is the positivity of its spectral weight. In the terminology of electromagnetism and network theory, where analyticity and positivity are called causality and passivity~\cite{srivastava21}, such a self-energy is causal but not passive.

\begin{lemma}
\label{lem:test}
If $\Sigma$ is causal, then $\Gamma(i\omega_n)\succeq0$ for every $\omega_n>0$. Hence a negative eigenvalue of $\Gamma$ at a single Matsubara frequency proves that no causal self-energy takes the given Matsubara values.
\end{lemma}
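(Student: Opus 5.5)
The plan is to read the result directly off the spectral representation quoted just before Eq.~\eqref{eq:gamma}. Causality gives $\Sigma(z)=\Sigma_\infty+\int d\varepsilon\,B(\varepsilon)/(z-\varepsilon)$ with Hermitian $\Sigma_\infty$ and $B(\varepsilon)\succeq0$. First I compute the anti-Hermitian part at $z=i\omega_n$. Since $\Sigma_\infty$ is Hermitian it drops out of $\Sigma-\Sigma^\dagger$. Each $B(\varepsilon)$ is Hermitian, so the integral contributes
\[
\int d\varepsilon\,B(\varepsilon)\Big[\frac{1}{i\omega_n-\varepsilon}-\frac{1}{-i\omega_n-\varepsilon}\Big]
= -2i\,\omega_n\int d\varepsilon\,\frac{B(\varepsilon)}{\omega_n^2+\varepsilon^2}.
\]
Dividing by $-2i$ reproduces the right-hand side of Eq.~\eqref{eq:gamma}.

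Next I show positivity. For any vector $v$,
\[
v^\dagger\Gamma v=\omega_n\int d\varepsilon\,\frac{v^\dagger B(\varepsilon)v}{\omega_n^2+\varepsilon^2}.
\]
The integrand is nonnegative because $B(\varepsilon)\succeq0$, and the weight $\omega_n/(\omega_n^2+\varepsilon^2)$ is strictly positive for $\omega_n>0$. Hence $\Gamma(i\omega_n)\succeq0$.

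The only real obstacle is making the integral well-defined, since $B$ is in general a matrix-valued positive measure rather than a function. On the finite clusters considered here it is a finite sum of Dirac masses with positive semi-definite residues, so everything reduces to finite sums. In general I would interpret $B(\varepsilon)\,d\varepsilon$ as a positive semi-definite operator-valued measure, as in the Herglotz representation of Ref.~\cite{gesztesy00}. The bounded weight $\omega_n/(\omega_n^2+\varepsilon^2)$ is then integrable against each scalar measure $v^\dagger B\,v$: these measures are finite with respect to $(1+\varepsilon^2)^{-1}$, so the argument carries over unchanged.

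The second sentence of the lemma is the contrapositive. Suppose the given Matsubara data have $\Gamma(i\omega_{n_0})$ with a negative eigenvalue at some $\omega_{n_0}>0$. Any causal $\Sigma$ taking those values would then satisfy $\Gamma(i\omega_{n_0})\succeq0$ by the first part, which is a contradiction. I would add one remark: the statement is one-directional. Positivity at finitely many frequencies does not guarantee that a positive measure $B$ exists, which is why the block Pick test mentioned in the abstract is needed for a sharper diagnosis.
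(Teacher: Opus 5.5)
Your proposal is correct and follows the paper's own route. The paper treats the lemma as immediate from Eq.~\eqref{eq:gamma}: the spectral representation with $B\succeq0$, weighted by the strictly positive kernel $\omega_n/(\omega_n^2+\varepsilon^2)$, gives $\Gamma\succeq0$, and the second sentence is the contrapositive. Your measure-theoretic remark is a harmless addition, but what it actually uses is that this kernel is bounded by a constant times $(1+\varepsilon^2)^{-1}$, not merely that it is bounded.
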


The pointwise test of Lemma~\ref{lem:test} is necessary but not sufficient: a self-energy that passes it at every sampled frequency need not be causal. For the finite interpolation problem in the matrix-Herglotz class, the Pick criterion used in Nevanlinna and Carath\'eodory continuation~\cite{fei21, fei21b} is necessary and sufficient for the sampled values of $F=-\Sigma$ to admit a Herglotz interpolant. A failure therefore excludes every causal self-energy; a pass does not show that the interpolant has the asymptotics of a self-energy, nor that the underlying self-energy is causal. The criterion requires the block matrix
\begin{equation}
\mathcal P_{ab} = \frac{\Sigma(i\omega_b)^\dagger-\Sigma(i\omega_a)}{i(\omega_a+\omega_b)},\qquad a,b=1,\dots,N_P,
\label{eq:pick}
\end{equation}
to be positive semi-definite; its diagonal blocks are $\Gamma(i\omega_a)/\omega_a$, so the pointwise test is the block-diagonal part of the Pick test. We use the pointwise test to \emph{prove} non-causality, and the Pick test to probe the points that pass it (Sec.~\ref{sec:results}). No analytic continuation~\cite{vidberg77, jarrell96} is needed for either. We quantify a violation by
\begin{equation}
v = \max\Big(0,\;-\min_{n\le N_\omega}\lambda_{\min}\big[\Gamma(i\omega_n)\big]\Big),
\label{eq:v}
\end{equation}
with $N_\omega=50$ positive frequencies.

% ================================================================
\section{Causality conditions for the splice}
\label{sec:criterion}

\subsection{Spectral-positivity condition}
Both ingredients of the splice are causal: the exact self-energy by construction, and $GW$ because it is a perfect-square diagram~\cite{stefanucci14} (App.~\ref{app:gw}). Each therefore has a spectral representation with a positive semi-definite spectral function, $B^{\ex}(\varepsilon)\succeq0$ and $B^{GW}(\varepsilon)\succeq0$. Because Eq.~\eqref{eq:splice} is linear, the splice has the same representation with
\begin{equation}
B^{\GD}(\varepsilon) = B^{GW}(\varepsilon) + \Dg\big[B^{\loc}(\varepsilon)-B^{GW}(\varepsilon)\big],
\label{eq:BGD}
\end{equation}
a Hermitian but in general indefinite matrix-valued measure. The following statement is a direct consequence of the Herglotz representation and the linearity of the splice; we state it to fix notation for its consequences.

\begin{proposition}[spectral positivity]
\label{thm:global}
$\Sigma^{\GD}$ is causal if and only if $B^{\GD}$ is a positive semi-definite matrix-valued measure, i.e.\ $\int_{\Omega}v^\dagger dB^{\GD}(\varepsilon)\,v\ge0$ for every vector $v$ and every interval $\Omega$ of the real axis. For a discrete spectrum, $B^{\GD}=\sum_pR_p\,\delta(\varepsilon-\varepsilon_p)$, this means $R_p\succeq0$ for every $p$. In particular:
(a)~it is causal if $B^{\loc}_{ii}(\varepsilon)\ge B^{GW}_{ii}(\varepsilon)$ for all $i$ and $\varepsilon$;
(b)~for a translation-invariant cluster or a single-band lattice it is causal if and only if
\begin{equation}
B^{\loc}(\varepsilon) \;\ge\; B^{GW}_{\loc}(\varepsilon)-\min_{\mathbf k}B^{GW}(\mathbf k,\varepsilon)\quad\text{for all }\varepsilon .
\label{eq:global}
\end{equation}
\end{proposition}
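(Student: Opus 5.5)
The plan is to derive everything from the uniqueness of the Herglotz (Stieltjes) representation together with the linearity of Eq.~\eqref{eq:splice}. First we would observe that $\Sigma^{\GD}(z)=\Sigma^{\GD}_\infty+\int d\varepsilon\,B^{\GD}(\varepsilon)/(z-\varepsilon)$ holds with $B^{\GD}$ given by Eq.~\eqref{eq:BGD}. This comes from applying $\Dg$, which is linear and maps Hermitian matrices to Hermitian matrices, to the representations of $\Sigma^{GW}$ and $\Sigma^{\ex}$. The static part $\Sigma^{\GD}_\infty$ is Hermitian. For the ``if'' direction, $B^{\GD}\succeq0$ makes this exactly the representation defining a causal self-energy. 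For ``only if'', we would use that the measure is recovered by Stieltjes inversion, $B(\varepsilon)=-\pi^{-1}\lim_{\eta\to0^+}[\Sigma(\varepsilon+i\eta)-\Sigma(\varepsilon+i\eta)^\dagger]/(2i)$ in the weak sense on intervals. Hence if $\Sigma^{\GD}$ also had a representation with a positive measure, that measure would coincide with $B^{\GD}$. Testing against $v^\dagger(\cdot)v$ on intervals $\Omega$ then gives the stated criterion. For a discrete spectrum we would take $\Omega$ shrinking to a single pole $\varepsilon_p$, which gives $R_p\succeq0$.

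For (a), we would write $B^{\GD}=(I-\Dg)B^{GW}+\Dg B^{\loc}$ and rewrite it as $B^{GW}+\Dg[B^{\loc}-B^{GW}]$. The first term is positive semi-definite. The second is a diagonal matrix with entries $B^{\loc}_{ii}-B^{GW}_{ii}\ge0$ by hypothesis, so it is also positive semi-definite, and the sum is positive semi-definite. This argument is to be read at the level of measures, i.e.\ integrated over any $\Omega$.

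For (b), we would use translation invariance to diagonalize in momentum. $B^{GW}$ is circulant, so it is diagonal in the Bloch basis with eigenvalues $B^{GW}(\mathbf k,\varepsilon)\ge0$ in the single-band case. The local part $\Dg B^{GW}=B^{GW}_{\loc}I$, with $B^{GW}_{\loc}=N^{-1}\sum_{\mathbf k}B^{GW}(\mathbf k)$, is proportional to the identity because all sites are equivalent. Likewise $\Dg B^{\loc}=B^{\loc}I$. Hence $B^{\GD}$ is diagonal in $\mathbf k$ with eigenvalues $B^{GW}(\mathbf k,\varepsilon)-B^{GW}_{\loc}(\varepsilon)+B^{\loc}(\varepsilon)$. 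Positivity for all $\mathbf k$ is then equivalent to Eq.~\eqref{eq:global}, obtained by taking the worst $\mathbf k$. On a lattice, $\min_{\mathbf k}$ should be read as an essential infimum; on a cluster it is a genuine minimum over finitely many momenta.

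The main obstacle is making the pointwise-in-$\varepsilon$ statements rigorous when the measures are not absolutely continuous or share no common discrete support. This matters especially in (b), where the eigenvalue comparison must hold as measures: the minimum over $\mathbf k$ of measures is not itself a measure. We would handle it by passing to Radon--Nikodym densities with respect to a common dominating scalar measure, for instance the trace of $B^{GW}+B^{\ex}$. Alternatively, on the finite clusters used here, we would work directly with the finitely many pole residues, where the comparison is literally pointwise.
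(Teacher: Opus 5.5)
Your proposal is correct and follows the paper's proof: the main equivalence comes from uniqueness of the spectral measure under Stieltjes inversion together with linearity of the splice, (a) from ``positive semi-definite plus a non-negative diagonal matrix,'' and (b) from the uniform eigenvalue shift in the Bloch basis. Reading $\min_{\mathbf k}$ and the pointwise-in-$\varepsilon$ inequalities through densities with respect to a common dominating measure makes precise what the paper leaves implicit, but does not change the argument.
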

\begin{proof}
The spectral measure of a function with this representation is fixed by Stieltjes inversion, $B=-\pi^{-1}\lim_{\eta\to0^+}\mathrm{Im}\,\Sigma(\varepsilon+i\eta)$, and the negative of the function is a matrix Herglotz function if and only if that measure is positive semi-definite~\cite{gesztesy00}. (a)~$B^{\GD}$ is then a positive semi-definite matrix plus a non-negative diagonal one. (b)~All diagonal elements are equal, so Eq.~\eqref{eq:BGD} shifts every eigenvalue $B^{GW}(\mathbf k,\varepsilon)$ by $B^{\loc}(\varepsilon)-B^{GW}_{\loc}(\varepsilon)$.
\end{proof}

\begin{corollary}[finite clusters]
\label{cor:finite}
On a finite cluster all spectral measures are discrete, $B(\varepsilon)=\sum_pR_p\,\delta(\varepsilon-\varepsilon_p)$. If $\varepsilon_p$ is a pole of $\Sigma^{GW}_{ij}$ ($i\ne j$) with non-zero residue but not a pole of $\Sigma^{\loc}_{ii}$ or $\Sigma^{\loc}_{jj}$, the residue $R^{\GD}_p$ restricted to sites $i,j$ has the form $\bigl(\begin{smallmatrix}0&r\\ r^*&0\end{smallmatrix}\bigr)$ with $r\ne0$, whose eigenvalues are $\pm|r|$. It is therefore indefinite, and $\Sigma^{\GD}$ is not causal.
\end{corollary}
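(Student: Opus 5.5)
The plan is to read off the residue of the splice at $\varepsilon_p$ entry by entry from Eq.~\eqref{eq:BGD}, and then conclude with Proposition~\ref{thm:global}.

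\emph{Discreteness.} On a finite cluster the exact self-energy is a rational function of $z$, since $G$ is a finite Lehmann sum. The same holds for $\Sigma^{GW}$: Eqs.~\eqref{eq:P}--\eqref{eq:sgw} involve finitely many poles of $G$, the inverse $(1-UP)^{-1}$ of a finite matrix, and frequency sums that close into finite pole sums. All spectral measures are therefore finite sums of weighted deltas, and we may write $B(\varepsilon)=\sum_pR_p\,\delta(\varepsilon-\varepsilon_p)$. Taking the union of the pole sets of all ingredients lets us use a common pole list, with some residues allowed to vanish.

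\emph{Residue at $\varepsilon_p$.} By Eq.~\eqref{eq:BGD} we have $R^{\GD}_p=R^{GW}_p+\Dg[R^{\loc}_p-R^{GW}_p]$. The off-diagonal entries are unchanged by $\Dg$, so $(R^{\GD}_p)_{ij}=(R^{GW}_p)_{ij}\equiv r$. This $r$ is non-zero by hypothesis, because $\varepsilon_p$ is a pole of $\Sigma^{GW}_{ij}$ with non-zero residue. The diagonal entries are replaced by the local ones, $(R^{\GD}_p)_{ii}=(R^{\loc}_p)_{ii}$ and likewise for $jj$. These vanish because $\varepsilon_p$ is not a pole of $\Sigma^{\loc}_{ii}$ or $\Sigma^{\loc}_{jj}$. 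Hermiticity of the residue gives $(R^{\GD}_p)_{ji}=r^*$. The $2\times2$ principal submatrix on sites $i,j$ is therefore exactly $\bigl(\begin{smallmatrix}0&r\\ r^*&0\end{smallmatrix}\bigr)$, whose eigenvalues are $\pm|r|$. Concretely, the vector $v=(1,-r^*/|r|)$ on sites $i,j$, extended by zeros, gives $v^\dagger R^{\GD}_p v=-2|r|<0$.

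\emph{Conclusion.} A principal submatrix of a positive semi-definite matrix is positive semi-definite, so $R^{\GD}_p$ is indefinite. Take an interval $\Omega$ containing $\varepsilon_p$ and no other pole. Then $\int_\Omega v^\dagger dB^{\GD}v=v^\dagger R^{\GD}_p v<0$, and Proposition~\ref{thm:global} implies that $\Sigma^{\GD}$ is not causal.

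The only delicate step is the identification of the diagonal residue with zero. We need the pole of $\Sigma^{\loc}_{ii}$ to be absent, not merely to carry small weight; this is exactly what the hypothesis supplies. We also need the discreteness argument to hold for $\Sigma^{GW}$ computed with the full (infinite) Matsubara sums, which is the rational-function point above. Everything else is linear algebra.
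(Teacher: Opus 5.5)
Your proof is correct and is essentially the paper's argument, which the paper compresses into the statement itself: by Eq.~\eqref{eq:BGD} the off-diagonal residue is inherited unchanged from $GW$, the diagonal residues are replaced by the vanishing local ones, the resulting block $\bigl(\begin{smallmatrix}0&r\\ r^*&0\end{smallmatrix}\bigr)$ is indefinite, and Proposition~\ref{thm:global} then rules out causality. Your explicit witness vector, the principal-submatrix remark and the rationality argument for discreteness spell out steps that the paper leaves implicit.
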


The poles of $\Sigma^{GW}$ lie at the single-particle excitation energies shifted by the RPA collective energies, while the poles of the exact self-energy are set by the zeros of $\det G$. They do not coincide in general, so on a finite cluster the splice is generically non-causal. The argument needs both a pole mismatch and a non-zero off-diagonal residue; it identifies a generic mechanism, not a proof that every finite-cluster splice fails at every parameter value. On a lattice with continuous spectra, Eq.~\eqref{eq:global} is instead a genuine inequality that may or may not hold. The off-diagonal $GW$ spectral weight integrates to zero, $\int d\varepsilon\,B^{GW}_{ij}(\varepsilon)=0$ for $i\neq j$ (Sec.~\ref{sec:tail}), so it necessarily takes both signs. Without an equally large local weight at the same energies, it makes $B^{\GD}$ indefinite.

\subsection{What the Matsubara axis shows}
Calculations are usually done on the Matsubara axis, where only a smoothed version of Proposition~\ref{thm:global} is visible. From Eq.~\eqref{eq:gamma},
\begin{equation}
\Gamma^{\GD}(i\omega_n) = \omega_n\!\int\! d\varepsilon\,\frac{B^{\GD}(\varepsilon)}{\omega_n^2+\varepsilon^2}
= \Gamma^{GW} + \Dg\big[\Gamma^{\loc}-\Gamma^{GW}\big].
\label{eq:gammaGD}
\end{equation}
The positive kernel averages $B^{\GD}$ over a window of width $\sim\omega_n$, so $\Gamma^{\GD}(i\omega_n)$ can be positive semi-definite even when $B^{\GD}$ is not. The pointwise conditions below are therefore necessary consequences of causality, not equivalents. Their value is that a failure at any single frequency is a proof of non-causality that needs no analytic continuation.

\begin{proposition}[sufficient condition, pointwise]
\label{prop:suff}
If $\Gamma^{\loc}_{ii}(i\omega_n)\ge\Gamma^{GW}_{ii}(i\omega_n)$ for all sites $i$ at a given $\omega_n$, then $\Gamma^{\GD}(i\omega_n)\succeq0$.
\end{proposition}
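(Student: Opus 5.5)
The plan is to prove Proposition~\ref{prop:suff} as the Matsubara-axis analogue of part (a) of Proposition~\ref{thm:global}, using the linear decomposition in Eq.~\eqref{eq:gammaGD}. Fix $\omega_n>0$ and write
\begin{equation*}
\Gamma^{\GD}(i\omega_n)=\Gamma^{GW}(i\omega_n)+\Dg\big[\Gamma^{\loc}(i\omega_n)-\Gamma^{GW}(i\omega_n)\big].
\end{equation*}
The idea is to show that the right-hand side is a sum of two positive semi-definite matrices; no further structure of the self-energies is needed.

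For the first term, I will use that $\Sigma^{GW}$ is causal (the perfect-square argument of App.~\ref{app:gw}, already invoked above Eq.~\eqref{eq:BGD}). Lemma~\ref{lem:test} then gives $\Gamma^{GW}(i\omega_n)\succeq0$. Equivalently, the same follows directly from Eq.~\eqref{eq:gamma}: with $B^{GW}\succeq0$ and the strictly positive kernel $\omega_n/(\omega_n^2+\varepsilon^2)$, the integral yields a positive semi-definite matrix. For the second term, I need that $\Dg$ commutes with forming $\Gamma$. Indeed, $\Gamma$ is the anti-Hermitian part of $\Sigma$ divided by $-2i$, and taking the site-diagonal commutes both with this operation and with $\dagger$. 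Hence $\Dg[\Gamma^{\loc}-\Gamma^{GW}]$ is the real diagonal matrix with entries $\Gamma^{\loc}_{ii}-\Gamma^{GW}_{ii}$, because diagonal entries of a Hermitian matrix are real. By hypothesis these entries are $\ge0$, so this diagonal matrix is positive semi-definite.

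Adding the two terms, for any vector $v$ we get
\begin{equation*}
v^\dagger\Gamma^{\GD}v=v^\dagger\Gamma^{GW}v+\sum_i\big(\Gamma^{\loc}_{ii}-\Gamma^{GW}_{ii}\big)|v_i|^2\ge0.
\end{equation*}
The hypothesis is imposed only at the single frequency $\omega_n$, so the conclusion is likewise pointwise. Positivity of $\Gamma^{\loc}$ itself is never used.

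There is no real obstacle here. The only points to state carefully are the compatibility of $\Dg$ with the damping-matrix operation, which is the second equality in Eq.~\eqref{eq:gammaGD}, and the reliance on the causality of $GW$ to obtain $\Gamma^{GW}\succeq0$.
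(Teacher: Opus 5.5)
Your proposal is correct and matches the paper's proof, which writes $\Gamma^{\GD}(i\omega_n)$ as the positive semi-definite $\Gamma^{GW}$ (positive because $GW$ is causal) plus the non-negative diagonal matrix $\Dg[\Gamma^{\loc}-\Gamma^{GW}]$. Your check that $\Dg$ commutes with forming the damping matrix is the second equality of Eq.~\eqref{eq:gammaGD}, which the paper uses implicitly.
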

\begin{proof}
$\Gamma^{\GD}$ is then a positive semi-definite matrix plus a non-negative diagonal matrix.
\end{proof}

The splice is therefore safe whenever the inserted local self-energy damps at least as strongly as $GW$ does locally. A violation needs the opposite: $GW$ must \emph{overdamp} locally, so that removing its local part removes more damping than its nonlocal part can tolerate.

\begin{proposition}[translation-invariant case, pointwise]
\label{prop:iff}
For a translation-invariant cluster or a single-band lattice, let $\Gamma^{GW}(\mathbf k,i\omega_n)\ge0$ be the eigenvalues of $\Gamma^{GW}$, $\gamma^{GW}=N^{-1}\sum_{\mathbf k}\Gamma^{GW}(\mathbf k)$ its local value and $\gamma^{\loc}$ the inserted local damping. Then $\Gamma^{\GD}(i\omega_n)\succeq0$ if and only if
\begin{equation}
\gamma^{GW}(i\omega_n)-\gamma^{\loc}(i\omega_n) \;\le\; \min_{\mathbf k}\Gamma^{GW}(\mathbf k,i\omega_n).
\label{eq:iff}
\end{equation}
\end{proposition}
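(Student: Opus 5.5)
The plan is to diagonalize $\Gamma^{GW}(i\omega_n)$ and show that the splice only shifts its spectrum by a multiple of the identity. This is the pointwise analogue of part (b) of Proposition~\ref{thm:global}.

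\emph{Step 1: $\Gamma^{GW}$ is diagonal in momentum.} For a translation-invariant cluster (or single-band lattice), $\Gamma^{GW}(i\omega_n)$ commutes with all lattice translations, because $\Sigma^{GW}$ does and so does $\Sigma^{GW\dagger}$. It is therefore diagonalized by the plane-wave (Bloch) basis. Its eigenvalues are $\Gamma^{GW}(\mathbf k,i\omega_n)$, and these are non-negative by Lemma~\ref{lem:test}, since $GW$ is causal. For a finite cluster one should note that translation-invariant means circulant in the appropriate sense (e.g.\ the rings). The open chain is excluded.

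\emph{Step 2: the local correction is a multiple of the identity.} Translation invariance also makes all diagonal elements equal. The diagonal of $\Gamma^{GW}$ is therefore the momentum average,
\[
\Gamma^{GW}_{ii}=N^{-1}\sum_{\mathbf k}\Gamma^{GW}(\mathbf k)=\gamma^{GW},
\]
which is the trace over $N$. The inserted local damping is site-independent as well, $\Gamma^{\loc}_{ii}=\gamma^{\loc}$. Equation~\eqref{eq:gammaGD} then reads
\[
\Gamma^{\GD}=\Gamma^{GW}+(\gamma^{\loc}-\gamma^{GW})I.
\]
Its eigenvalues are $\Gamma^{GW}(\mathbf k)+\gamma^{\loc}-\gamma^{GW}$, with the same eigenvectors.

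\emph{Step 3: conclude.} $\Gamma^{\GD}\succeq0$ holds if and only if the smallest of these eigenvalues is non-negative. That condition is exactly Eq.~\eqref{eq:iff}, so both directions follow at once.

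For the lattice case, "eigenvalues" means the spectrum of a multiplication operator in $\mathbf k$. The minimum over $\mathbf k$ should then be read as an essential infimum over the Brillouin zone; the continuity of $\Gamma^{GW}(\mathbf k)$ lets it be written as a minimum.

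The only real obstacle is justifying that the inserted $\gamma^{\loc}$ is site-independent, and that $\Dg$ acting on a translation-invariant matrix gives $\gamma I$. I would take this from the assumption that $\Sigma^{\loc}=\Dg\Sigma^{\ex}$ inherits the cluster symmetry from the exact solution.
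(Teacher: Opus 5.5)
Your proposal is correct and follows the paper's proof: equal diagonal elements make $\Dg[\Gamma^{\loc}-\Gamma^{GW}]=(\gamma^{\loc}-\gamma^{GW})I$, which shifts every eigenvalue $\Gamma^{GW}(\mathbf k)$ by the same amount, so positivity holds exactly when the smallest shifted eigenvalue stays non-negative. Your additional remarks fill in points the paper leaves implicit: the diagonal element equals the momentum average because it is the trace divided by $N$; $\gamma^{\loc}$ is site-independent by the cluster symmetry of the exact solution (and is already assumed scalar in the statement); and on a lattice the minimum is an essential infimum.
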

\begin{proof}
All diagonal elements are equal, so $\Dg[\Gamma^{\loc}-\Gamma^{GW}]=(\gamma^{\loc}-\gamma^{GW})I$. This shifts every eigenvalue $\Gamma^{GW}(\mathbf k)$ by the same amount, and the smallest one must stay non-negative.
\end{proof}

In words, at a given frequency the inserted local damping may undercut that of $GW$ only by the momentum minimum of the $GW$ damping. Propositions~\ref{prop:suff} and~\ref{prop:iff} are statements about $\Gamma^{\GD}$ at one Matsubara frequency. A failure at any frequency proves non-causality (Lemma~\ref{lem:test}); satisfying them at every Matsubara frequency is necessary for causality but not sufficient.

\begin{corollary}[dimer]
\label{cor:dimer}
For the dimer, $\Gamma^{GW}(k)=\gamma^{GW}\pm\Gamma^{GW}_{12}$, and $\Gamma^{\GD}(i\omega_n)\succeq0$ if and only if $\gamma^{\loc}(i\omega_n)\ge|\Gamma^{GW}_{12}(i\omega_n)|$.
\end{corollary}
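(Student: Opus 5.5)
The plan is to specialize Proposition~\ref{prop:iff} to the two-site translation-invariant cluster. The dimer is invariant under the site exchange $1\leftrightarrow2$, which is the cluster translation. Every matrix built from $G$, namely $\Sigma^{GW}$, $\Sigma^{\ex}$ and hence $\Gamma^{GW}$ and $\Gamma^{\loc}$, commutes with the exchange. At fixed $\omega_n>0$, $\Gamma^{GW}(i\omega_n)$ is therefore a $2\times2$ matrix with equal diagonal entries $\gamma^{GW}$ and off-diagonal entries $\Gamma^{GW}_{12}$, $\Gamma^{GW}_{21}$. For the real Hamiltonian we have $\Sigma_{ij}=\Sigma_{ji}$, so $\Gamma=-\mathrm{Im}\,\Sigma$ element-wise and $\Gamma^{GW}_{21}=\Gamma^{GW}_{12}$ is real.

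First I diagonalize $\Gamma^{GW}$ in the bonding/antibonding basis $(1,\pm1)/\sqrt2$, which are the two momenta $k=0,\pi$. This gives eigenvalues $\Gamma^{GW}(k)=\gamma^{GW}\pm\Gamma^{GW}_{12}$, the first claim. Their average is $\gamma^{GW}$, consistent with the definition of the local value in Proposition~\ref{prop:iff}. The minimum over $k$ is $\gamma^{GW}-|\Gamma^{GW}_{12}|$.

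Next I substitute this into Eq.~\eqref{eq:iff}. The condition $\gamma^{GW}-\gamma^{\loc}\le\gamma^{GW}-|\Gamma^{GW}_{12}|$ reduces to $\gamma^{\loc}(i\omega_n)\ge|\Gamma^{GW}_{12}(i\omega_n)|$. As a direct cross-check, Eq.~\eqref{eq:gammaGD} gives $\Gamma^{\GD}=\bigl(\begin{smallmatrix}\gamma^{\loc}&\Gamma^{GW}_{12}\\ \Gamma^{GW}_{12}&\gamma^{\loc}\end{smallmatrix}\bigr)$. This matrix is positive semi-definite exactly when its eigenvalues $\gamma^{\loc}\pm\Gamma^{GW}_{12}$ are both non-negative, which is the same condition.

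The only point needing care is justifying that $\Gamma^{GW}_{12}$ is real and that the diagonal entries are equal, so that the eigenvalues take the stated form. Both follow from the reality of $H$ and the exchange symmetry, as noted above. Nothing else is an obstacle. The argument does not require $\gamma^{\loc}$ to be the exact value; it holds for any site-symmetric inserted local damping.
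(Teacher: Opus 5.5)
Your proposal is correct and takes the paper's route. The paper states this corollary without a separate proof, as a direct specialization of Proposition~\ref{prop:iff} to the two-site cluster: exchange symmetry gives equal diagonal entries, the bonding/antibonding states give the eigenvalues $\gamma^{GW}\pm\Gamma^{GW}_{12}$, and your direct check that $\Gamma^{\GD}$ has eigenvalues $\gamma^{\loc}\pm\Gamma^{GW}_{12}$ is the identity the paper uses for $\lambda_{\min}$ in its mechanism paragraph.
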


\begin{proposition}[particle-hole symmetry, pointwise]
\label{prop:ph}
On a bipartite cluster at $\mu=0$, $\Sigma^{\ex}_{ij}(i\omega_n)$ and $\Sigma^{GW}_{ij}(i\omega_n)$ are real for $i,j$ on opposite sublattices, so the corresponding damping elements vanish. In particular, for the half-filled dimer $\Gamma^{\GD}(i\omega_n)\succeq0$ at every Matsubara frequency, for all $U$ and $\beta$.
\end{proposition}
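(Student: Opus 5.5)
The plan is to use the particle-hole transformation $c_{i\sigma}\to \epsilon_i c^\dagger_{i\sigma}$, with $\epsilon_i=\pm1$ on the two sublattices. On a bipartite cluster at $\mu=0$ this leaves $H$ in Eq.~\eqref{eq:ham} invariant: the hopping maps to itself because $\epsilon_i\epsilon_j=-1$ across a bond, and $d_{i\uparrow}d_{i\downarrow}$ is invariant because each $d_{i\sigma}\to -d_{i\sigma}$.

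The first step is to derive from the Lehmann representation the identity
\begin{equation*}
G_{ij}(i\omega_n)=-\epsilon_i\epsilon_j\,G_{ji}(-i\omega_n).
\end{equation*}
Since the Hamiltonian is real and the Green's function is symmetric, this becomes $G_{ij}(i\omega_n)=-\epsilon_i\epsilon_j G_{ij}(i\omega_n)^*$. Hence $G_{ij}$ is purely imaginary on the same sublattice and real on opposite sublattices. Writing $S=\mathrm{diag}(\epsilon_i)$, this is the matrix statement $G(i\omega_n)=-SG(i\omega_n)^*S$.

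The same property then carries over to $G^{-1}$. The one-body matrix $h$ at $\mu=0$ is pure hopping, so it satisfies $h=-ShS$. It follows that
\begin{equation*}
\Sigma^{\ex}=i\omega_n I-h-G^{-1}=-S(\Sigma^{\ex})^*S.
\end{equation*}
Therefore the off-sublattice elements of $\Sigma^{\ex}$ are real.

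For $\Sigma^{GW}$ the symmetry must be propagated through Eqs.~\eqref{eq:P}--\eqref{eq:sgw}.
\begin{itemize}
\item In $P_{ij}(i\nu_m)$ the product $G_{ij}G_{ji}$ picks up $\epsilon_i^2\epsilon_j^2=1$ under the transformation. Combined with $\omega\to-\omega$ and the shift of the summation index, this shows that $P(i\nu_m)$ is real. Hence $W-U$ is real and symmetric.
\item In Eq.~\eqref{eq:sgw} the factor $G_{ij}(i\omega_n+i\nu_m)$ carries the sign pattern $-\epsilon_i\epsilon_j$ under conjugation together with $\nu_m\to-\nu_m$. The element-wise product with the real, even-in-$\nu_m$ quantity $W-U$ therefore gives $\Sigma^{GW}_{ij}(i\omega_n)^*=-\epsilon_i\epsilon_j\Sigma^{GW}_{ij}(i\omega_n)$, up to the static Hartree term.
\item At $\mu=0$ with the $d$-convention, $\Sigma^H$ vanishes or is a Hermitian diagonal term, so it does not affect the off-diagonal elements.
\end{itemize}

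The main obstacle is the bookkeeping of frequency reflections under the Matsubara sums, together with the truncation correction of App.~\ref{app:numerics}. I would handle it by checking that the sum is symmetric under $n\to -n-1-m$, which makes the sum exactly invariant. I would also use the evenness $P(i\nu_m)=P(-i\nu_m)$, which follows from $P_{ij}=P_{ji}$.

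For the dimer the conclusion is direct. The off-diagonal damping elements vanish, so $\Gamma^{\GD}$ is diagonal with entries $\gamma^{\loc}=\Gamma^{\ex}_{ii}$. These are non-negative by Lemma~\ref{lem:test} applied to the causal $\Sigma^{\ex}$. Equivalently, Corollary~\ref{cor:dimer} holds with $\Gamma^{GW}_{12}=0$.
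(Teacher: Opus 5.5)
Your proposal is correct and follows the paper's proof. The particle-hole map gives $G^*(i\omega_n)=-SG(i\omega_n)S$, and $ShS=-h$ carries this relation over to $\Sigma^{\ex}$; the reality of $P$, of $W-U$ and of the opposite-sublattice $G_{ij}$ then carries it over to $\Sigma^{GW}_{ij}$ through the element-wise product, so the half-filled dimer splice is diagonal with entries $\Gamma^{\ex}_{ii}\ge0$. The frequency reflections and the index shift $n\to-n-1-m$ that you flag as the main obstacle are unnecessary: the conjugation relation holds at every fermionic frequency $i\omega_n+i\nu_m$, so each summand of $P$ and of $\Sigma^{GW}_{ij}$ is already real term by term, and truncated sums are therefore real as well.
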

\begin{proof}
With $S=\mathrm{diag}[(-1)^i]$, the particle-hole transformation $c_{i\sigma}\to(-1)^ic^\dagger_{i\sigma}$ leaves $H(\mu=0)$ invariant and gives $G^*(i\omega_n)=-SG(i\omega_n)S$. Since $ShS=-h$, the same relation holds for $G^{-1}$ and hence for $\Sigma^{\ex}$. $P$ and $W$ built from $G$ are real, and $G_{ij}$ is real for opposite-sublattice pairs, so the same holds for $\Sigma^{GW}_{ij}$. For the dimer, $\Gamma^{\GD}$ is then diagonal and equals $\Gamma^{\loc}\succeq0$.
\end{proof}

Proposition~\ref{prop:ph} is a statement on the Matsubara axis only. Particle-hole symmetry makes $B^{GW}_{12}(\varepsilon)$ odd in $\varepsilon$, so its Matsubara projection vanishes; $B^{GW}_{12}$ itself does not. By Corollary~\ref{cor:finite} the half-filled dimer splice is therefore still non-causal, which the pointwise test cannot see but the Pick test does (Sec.~\ref{sec:results}). Same-sublattice elements (sites 0 and 2 of the open chain and of the four-site ring) are purely imaginary at $\mu=0$, so Proposition~\ref{prop:ph} gives no guarantee beyond the dimer. We checked both statements numerically: the inter-sublattice damping vanishes to $10^{-12}$ at $\mu=0$, and the identity behind Eq.~\eqref{eq:iff} holds to $4\times10^{-12}$ on the dimer and the three-site ring.

% ================================================================
\section{High-frequency moments cannot detect the failure}
\label{sec:tail}

Writing $\Sigma(z)=\Sigma_\infty+C/z+O(z^{-2})$, Eq.~\eqref{eq:gamma} gives $\omega_n\Gamma(i\omega_n)\to C$, so $C\succeq0$ is a necessary condition for causality. The exact moment follows from the spectral moments of $G$~\cite{harris67, potthoff97} (App.~\ref{app:moments}):
\begin{equation}
C^{\ex}_{ij} = U^2\,\langle n_{i\bar\sigma}\rangle\big(1-\langle n_{i\bar\sigma}\rangle\big)\,\delta_{ij}\;\succeq\;0 .
\label{eq:cex}
\end{equation}
It is exactly local and positive.

For $GW$, the leading term $G(z)=I/z+O(z^{-2})$ is local, and $[W-U](i\nu_m)=O(\nu_m^{-2})$. Inserted into the element-wise product of Eq.~\eqref{eq:sgw}, this makes the off-diagonal elements of $\Sigma^{GW}$ decay as $O(z^{-2})$. The $GW$ moment is therefore also diagonal,
\begin{equation}
C^{GW}_{ij} = -\delta_{ij}\,\frac{1}{\beta}\sum_m [W-U]_{ii}(i\nu_m)\;\ge 0 ,
\label{eq:cgw}
\end{equation}
which is positive because $W-U=U\chi^{\mathrm{RPA}}U$ with $\chi^{\mathrm{RPA}}(i\nu_m)\preceq0$. It differs from Eq.~\eqref{eq:cex} in value: it is an RPA fluctuation rather than the exact local variance.

\begin{proposition}[tail of the splice]
\label{prop:tail}
For any splice of the form~\eqref{eq:splice} whose local part has the exact moment, $C^{\GD}=C^{\ex}$. The spliced moment is diagonal and positive semi-definite, so the condition $C\succeq0$ is always satisfied.
\end{proposition}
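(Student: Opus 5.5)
The plan is to take the $1/z$ coefficient of each term in the splice~\eqref{eq:splice} and use linearity. The only input needed about $\Sigma^{GW}$ is what was established just above: its first moment $C^{GW}$ in Eq.~\eqref{eq:cgw} is diagonal. The only input needed about the local part is the hypothesis that it carries the exact moment, i.e.\ $\Sigma^{\loc}(z)=\Sigma^{\loc}_\infty+\Dg C^{\ex}/z+O(z^{-2})$.

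\textbf{Step 1: expand and collect.} Write $\Sigma^{GW}(z)=\Sigma^{GW}_\infty+C^{GW}/z+O(z^{-2})$ and substitute both expansions into Eq.~\eqref{eq:splice}. The projection $\Dg$ is linear and acts entrywise in the site basis, so it commutes with the asymptotic expansion. The coefficient of $1/z$ is therefore
\[
C^{\GD}=C^{GW}+\Dg C^{\ex}-\Dg C^{GW}.
\]

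\textbf{Step 2: use diagonality.} Since $C^{GW}$ is diagonal, $\Dg C^{GW}=C^{GW}$, and the first and third terms cancel. This leaves $C^{\GD}=\Dg C^{\ex}$. By Eq.~\eqref{eq:cex}, $C^{\ex}$ is itself diagonal, so $\Dg C^{\ex}=C^{\ex}$ and hence $C^{\GD}=C^{\ex}$.

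\textbf{Step 3: positivity.} The diagonal entries of $C^{\ex}$ are $U^2 n(1-n)\ge0$, because $0\le\langle n_{i\bar\sigma}\rangle\le1$. A diagonal matrix with non-negative entries is positive semi-definite, so $C^{\GD}\succeq0$. Since $\omega_n\Gamma^{\GD}(i\omega_n)\to C^{\GD}$, the moment test $C\succeq0$ can never flag the splice. This is exactly the content of the proposition.

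The only step with real content is the diagonality of $C^{GW}$, which depends on the off-diagonal elements of $\Sigma^{GW}$ decaying as $O(z^{-2})$. I would take this from the preceding argument: the $1/z$ term of $G$ is local, and $W-U=O(\nu_m^{-2})$, so the sum $\beta^{-1}\sum_m$ over the off-diagonal $G_{ij}=O(z^{-2})$ times a summable $[W-U]_{ij}$ contributes only at order $z^{-2}$. The point to check is that this bosonic sum converges uniformly enough to exchange it with the $z\to\infty$ limit. The $O(\nu_m^{-2})$ decay of $W-U$ gives absolute convergence, which justifies the exchange by dominated convergence. Without the assumption that $C^{GW}$ is diagonal, one would instead obtain $C^{\GD}=C^{\ex}+(C^{GW}-\Dg C^{GW})$, and positivity would no longer be automatic.
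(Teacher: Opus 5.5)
Your proof is the paper's own argument. By linearity the $1/z$ coefficient of the splice is $C^{GW}+\Dg[C^{\ex}-C^{GW}]$, and the diagonality of $C^{GW}$ (Eq.~\eqref{eq:cgw}) and of $C^{\ex}$ (Eq.~\eqref{eq:cex}) reduces it to $C^{\ex}\succeq0$.

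One small correction to your closing aside on why $C^{GW}$ is diagonal. Summability of $[W-U]$ alone does not give a dominating function uniform in $\omega_n$. The terms with $\omega_n+\nu_m$ near the lowest fermionic frequency have $G_{ij}=O(1)$, so the envelope decays only like $1/|\nu_m|$, which is not summable. The exchange of limit and sum is better justified as in App.~\ref{app:gw}, by splitting the bosonic sum into two regions:
\begin{itemize}
\item $|\nu_m|\ll\omega_n$, where $G_{ij}=O(\omega_n^{-2})$;
\item $|\nu_m|\sim\omega_n$, where $[W-U]=O(\omega_n^{-2})$ and $\sum|G_{ij}|$ converges.
\end{itemize}
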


The moment test is therefore structurally blind to splice failures. Since $C=\int d\varepsilon\,B(\varepsilon)$, the vanishing off-diagonal moment of $GW$ also means that $\int d\varepsilon\,B^{GW}_{ij}(\varepsilon)=0$ for $i\ne j$: the off-diagonal $GW$ spectral weight takes both signs, which is what makes $B^{\GD}$ in Eq.~\eqref{eq:BGD} indefinite wherever the local weight is too small. Numerically, $C^{GW}$ is positive at all 528 points (smallest eigenvalue 0.008), and its off-diagonal elements are below $2\times10^{-4}$. For every violating point, $C^{\GD}$ agrees with the exact local moment to $10^{-4}$.

% ================================================================
\section{Exact-diagonalization results}
\label{sec:results}

We first report the pointwise test, which proves non-causality where it fails, then the Pick test, which reveals how much more widespread non-causality is. We scanned $\beta t\in\{10,20\}$, $U/t\in\{0.5,1,2,3,4,6\}$ and $\mu/t\in\{-3,-2.5,\dots,3\}$ on all four clusters (624 points). We kept the 528 points with $0.1<\langle N\rangle<2N_s-0.1$, i.e.\ away from the empty and full band, where all self-energies vanish. Table~\ref{tab:summary} and Fig.~\ref{fig:maps} summarize the results.

\begin{table*}[t]
\caption{Summary of the scan. ``Pointwise'' counts points where $\Gamma^{\GD}(i\omega_n)$ has a negative eigenvalue at one of the $N_\omega=50$ lowest frequencies ($v>10^{-6}$); $v_{\max}$ is the largest such violation and ``ratio'' the range of $\gamma^{GW}/\gamma^{\loc}$ at $\omega_1$ over these points. ``Suff.'' counts points where Proposition~\ref{prop:suff} holds at all 50 frequencies. ``Pick'' counts points where the scaled block Pick matrix~\eqref{eq:pick} built from the lowest $N_P$ frequencies has an eigenvalue below $-10^{-9}$, computed with the second implementation of App.~\ref{app:validation}. The exact and $GW$ self-energies fail none of these tests at any point.}
\label{tab:summary}
\centering
\begin{tabular}{@{}lccccccc@{}}
\toprule
 & & \multicolumn{4}{c}{pointwise test} & \multicolumn{2}{c}{Pick test} \\
\cmidrule(lr){3-6}\cmidrule(l){7-8}
Cluster & points & failures & $v_{\max}$ & ratio & suff. & $N_P=2$ & $N_P=8$ \\
\midrule
dimer        & 118 & 14 & 0.068 & 1.63-3.24 & 11 & 42 & 118 \\
open 3-chain & 132 & 14 & 0.060 & 1.76-8.99 & 9  & 52 & 124 \\
3-site ring  & 132 & 6  & 0.024 & 1.34-2.73 & 8  & 46 & 118 \\
4-site ring  & 146 & 20 & 0.016 & 1.30-2.65 & 9  & 71 & 121 \\
\midrule
total        & 528 & 54 &       & 1.30-8.99 & 37 & 211 & 481 \\
\bottomrule
\end{tabular}
\end{table*}

\begin{figure*}[t]
\centering
\includegraphics[width=\textwidth]{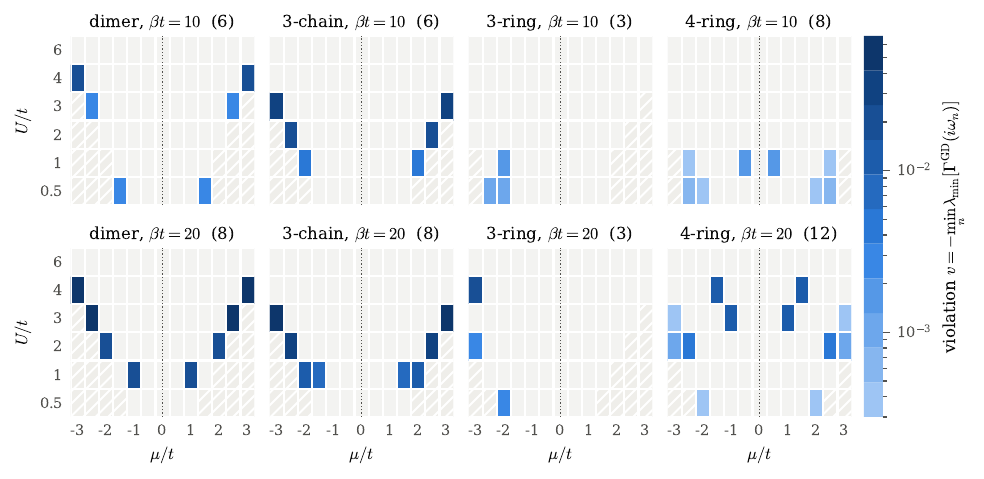}
\caption{Pointwise violation $v$ of the spliced self-energy, Eq.~\eqref{eq:v}, over the $(\mu,U)$ grid for all four clusters at $\beta t=10$ (top) and $20$ (bottom). Light cells pass the pointwise test (most of them fail the Pick test, Table~\ref{tab:summary}); hatched cells lie outside the filling window (empty or full band). The number of pointwise failures is given in each title. The dotted line marks $\mu=0$. $GW$ alone passes at every point.}
\label{fig:maps}
\end{figure*}

\paragraph{$GW$ passes the test.}
At all 528 points, $\Gamma^{GW}(i\omega_n)\succeq0$ at all 50 frequencies, and its tail is positive. $GW$ is causal by the perfect-square theorem~\cite{stefanucci14} (App.~\ref{app:gw}); the numerical result is a consistency check of the implementation, not an independent proof.

\paragraph{The splice fails it.}
At 54 points, $\Gamma^{\GD}$ has a negative eigenvalue, which by Lemma~\ref{lem:test} proves that $\Sigma^{\GD}$ is not causal: 14 on the dimer, 14 on the open chain, 6 on the three-site ring and 20 on the four-site ring. The largest violation is $v=0.068$ (dimer, $\beta t=20$, $U/t=4$, $\mu/t=\pm3$); the smallest is $1.6\times10^{-5}$. All values are stable to $1.1\times10^{-6}$ under a doubling of all Matsubara cutoffs, and no point changes status.

\paragraph{A block Pick test detects more.}
The block Pick matrix of Eq.~\eqref{eq:pick}, built from the lowest $N_P$ frequencies, confirms all 54 pointwise failures and detects many more (Table~\ref{tab:summary}): 211 of the 528 points fail with $N_P=2$ and 481 with $N_P=8$. These values come from the second implementation (App.~\ref{app:validation}), whose damping eigenvalues change by less than $2\times10^{-12}$ under quadrature refinement. The exact and $GW$ self-energies pass at every point; their smallest scaled Pick eigenvalues are $-4\times10^{-12}$ and $-1\times10^{-15}$, i.e.\ roundoff. We therefore count a failure only below $-10^{-9}$. The main implementation gives the same $N_P=2$ counts. The failures include 39 of the 48 half-filled points at $N_P=8$, among them all 12 half-filled dimer points, where the pointwise test passes by Proposition~\ref{prop:ph}. The increase of the counts with $N_P$ is consistent with the mechanism of Corollary~\ref{cor:finite}: more Matsubara frequencies resolve more of the indefinite real-axis weight. The corollary does not predict the counts themselves. At the 47 points that pass at $N_P=8$, the smallest scaled Pick eigenvalue lies between $-10^{-9}$ and roundoff, so these points are unresolved rather than shown to be causal. The counts are lower bounds on non-causality within this survey. They are properties of the chosen clusters, parameter grid, exact-$G$ construction and splice, not a failure rate for $GW$+DMFT calculations in general.

\paragraph{The Dyson Green's function.}
Since $\Sigma^{\GD}$ tends to a constant, $G^{\GD}=(z-h-\Sigma^{\GD})^{-1}$ has the leading behavior $I/z$. If $G^{\GD}$ were causal, $\Sigma^{\GD}=z-h-(G^{\GD})^{-1}$ would be causal as well~\cite{luttinger61, gesztesy00}. $G^{\GD}$ is therefore non-causal wherever $\Sigma^{\GD}$ is. On the Matsubara axis this is harder to see. $G^{\GD}$ passes the pointwise test at every point we checked: all 54 points where $\Sigma^{\GD}$ fails it, and all 382 points of the three small clusters (smallest eigenvalue $0.011$). The Pick test on $G^{\GD}$ fails at 135 points with $N_P=2$ and 454 with $N_P=8$. Matsubara data do not determine where on the real axis the non-causality of $G^{\GD}$ shows up, whether as negative spectral weight or as poles off the real axis. We therefore make no claim about specific real-frequency features of $G^{\GD}$.

\paragraph{Validation.}
Because Version~1 of this work was invalidated by a sign error, we validated the corrected implementation in three independent ways (App.~\ref{app:validation}).
(i)~At $U=0$ the polarization reproduces the exact density response to $1.1\times10^{-6}$; the Version~1 sign misses it by $2|\chi|$ (Table~\ref{tab:validation}).
(ii)~At all 528 points, $P(i\nu_m)\preceq0$ and the eigenvalues of $1-UP$ are $\ge1$, so $W$ is screened. With the Version~1 sign, $1+UP(0)$ is singular or negative at 410 points.
(iii)~A second implementation, sharing no code with the main one, reproduces the smallest eigenvalue of $\Gamma^{\GD}(i\omega_1)$ at nine representative points to $\le1.5\times10^{-7}$ (Table~\ref{tab:indep}). It works in imaginary time with its own exact diagonalization and no truncated Matsubara sums. The nine points include the smallest pointwise violation and the Version~1 headline case. The two kinds of check are distinct: the pointwise counts and violation values in Table~\ref{tab:summary} come from the main implementation and the Pick counts from the second one. The two implementations agree directly at the nine points of Table~\ref{tab:indep} and in the $N_P=2$ Pick counts; not every reported number was computed twice.

\paragraph{Where the pointwise failures occur.}
Three features stand out in Fig.~\ref{fig:maps}.
First, no cluster violates at $\mu=0$. For the dimer this is Proposition~\ref{prop:ph}; for the open chain and the four-site ring it is an observation.
Second, on the bipartite clusters violations come in $\mu\leftrightarrow-\mu$ pairs related by particle-hole symmetry. The non-bipartite three-site ring violates only on the hole-doped side ($\mu<0$).
Third, violations already appear at weak coupling, $U/t=0.5$, on three of the four clusters. They are not a strong-correlation effect.
Every violation occurs at $\omega_1$, nine of them also at $\omega_2$, and none at higher frequencies. The failure is a low-energy one, invisible in the tail (Sec.~\ref{sec:tail}). Pointwise failures mark the parameters where the indefinite part of $B^{\GD}$ is large enough at low energy to survive the Matsubara smoothing of Eq.~\eqref{eq:gammaGD}.

\paragraph{Mechanism.}
Figure~\ref{fig:mechanism} dissects the dimer at $\beta t=20$, $U/t=3$, $\mu/t=2.5$ ($\langle N\rangle=3.33$). At $\omega_1$, $GW$ damps the local channel 2.5 times more strongly than the exact solution ($\Gamma^{GW}_{11}=0.107$ versus $\Gamma^{\ex}_{11}=0.042$), and its nonlocal damping is $|\Gamma^{GW}_{12}|=0.085$. By Corollary~\ref{cor:dimer}, $\lambda_{\min}=\Gamma^{\ex}_{11}-|\Gamma^{GW}_{12}|=-0.043$, which is the violation measured. At higher frequencies the exact local damping catches up, and $\omega_n\lambda_{\min}$ approaches the exact moment $U^2n_\sigma(1-n_\sigma)$ from below, as Proposition~\ref{prop:tail} requires.

\begin{figure}[t]
\centering
\includegraphics[width=\columnwidth]{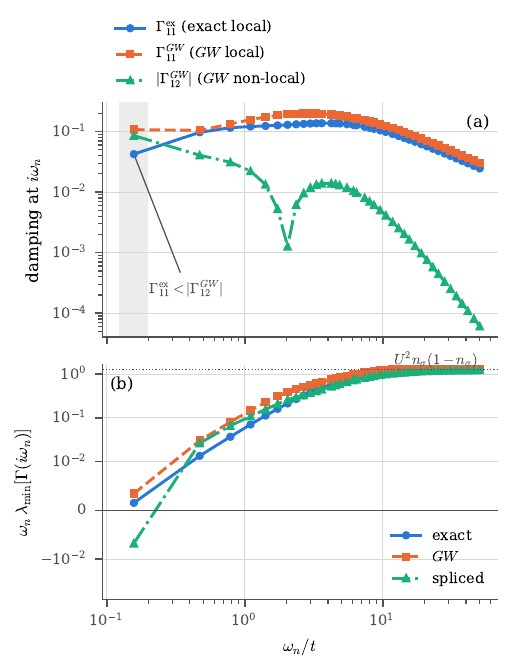}
\caption{Dimer at $\beta t=20$, $U/t=3$, $\mu/t=2.5$. (a)~Local exact, local $GW$ and nonlocal $GW$ damping. In the shaded region $\Gamma^{\ex}_{11}<|\Gamma^{GW}_{12}|$, and the splice violates causality (Corollary~\ref{cor:dimer}). (b)~$\omega_n\lambda_{\min}[\Gamma(i\omega_n)]$ for the exact, $GW$ and spliced self-energies (symmetric-log scale). The spliced curve is negative at $\omega_1$ yet tends to the exact moment (dotted line) at high frequency.}
\label{fig:mechanism}
\end{figure}

\paragraph{Test of the pointwise criterion.}
Figure~\ref{fig:criterion}(a) plots the violation against $\gamma^{GW}/\gamma^{\loc}$ at $\omega_1$. Every violating point has a ratio above 1, as Proposition~\ref{prop:suff} requires. Many points with ratio above 1 still pass the pointwise test: local overdamping is necessary for a pointwise violation, not sufficient. Figure~\ref{fig:criterion}(b) tests the necessary and sufficient condition~\eqref{eq:iff} on the translation-invariant clusters. Points that fail and pass the pointwise test are separated exactly by the diagonal. The 222 translation-invariant points with no local deficit all pass it.

\begin{figure*}[t]
\centering
\includegraphics[width=\textwidth]{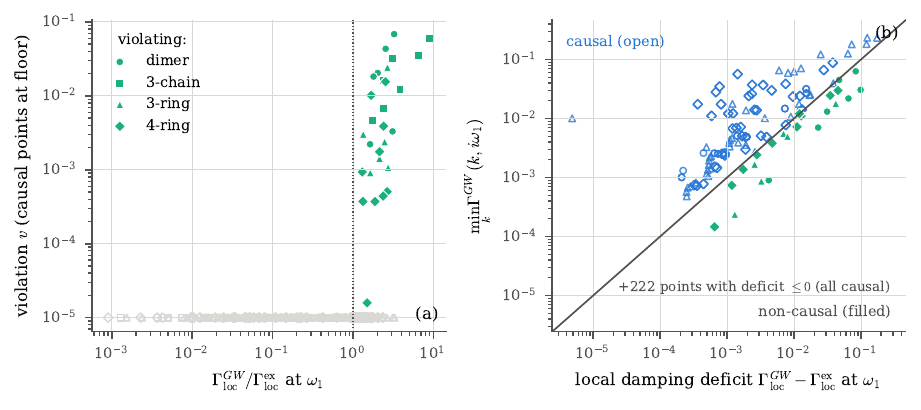}
\caption{(a)~Pointwise violation $v$ versus the ratio of $GW$ to exact local damping at $\omega_1$ for all 528 points; points that pass the pointwise test are drawn at the floor $10^{-5}$. All violations have ratio $>1$ (dotted line). (b)~Pointwise criterion, Eq.~\eqref{eq:iff}, for the dimer and the two rings at $\omega_1$: local damping deficit versus the momentum minimum of the $GW$ damping. Filled symbols fail the pointwise test and lie below the diagonal; open symbols pass it.}
\label{fig:criterion}
\end{figure*}

% ================================================================
\section{Discussion}
\label{sec:discussion}

\paragraph{Relation to previous work.}
Lee and Haule~\cite{lee17, leethesis17} showed that the hybridization of $GW$+DMFT acquires a non-causal term $\propto(\hat P\Sigma^{GW}-\Sigma_{\mathrm{DC}})$. They removed it by setting $\Sigma_{\mathrm{DC}}=\hat P\Sigma^{GW}$, and checked causality of $G_{\mathrm{imp}}$, $\Sigma_{\mathrm{imp}}$ and $\Delta$. The total self-energy with that double counting is Eq.~\eqref{eq:splice}, and our results show that it is not causal in general. The two results are compatible. The H$_2$ test is a half-filled two-site system, the setting in which Proposition~\ref{prop:ph} makes the closely related Hubbard-dimer splice pass the pointwise test for all couplings. Its non-causality there is visible only through the Pick test. Zhu and Chan~\cite{zhu21} anticipated that the total self-energy might not be strictly causal. Our criterion turns that caveat into a quantitative, checkable condition. The failure is distinct from the non-causal baths and impurity interactions of Refs.~\cite{lee17, nilsson17, chen22, backes22}: those studies concern causality problems in auxiliary quantities, such as the hybridization or the impurity interaction, rather than the assembled self-energy examined here. It is also distinct from the periodization problems of cluster DMFT~\cite{biroli02, biroli04, stanescu06, pelz26}, which come from truncating or interpolating a self-energy in momentum space. We note that non-causal behavior at the two lowest Matsubara frequencies has also been reported for cluster D-TRILEX self-energies~\cite{fossati26}. There it was attributed to missing diagrams; whether a criterion of the type~\eqref{eq:iff} applies is an open question.

\paragraph{Implications for production calculations.}
Proposition~\ref{thm:global} gives the condition: on a single-band lattice, the impurity spectral weight must satisfy Eq.~\eqref{eq:global} at every energy. On the Matsubara axis, where such calculations live, there are two levels of diagnostic. The pointwise test is cheapest: compare the impurity damping with $\gamma^{GW}-\min_{\mathbf k}\Gamma^{GW}(\mathbf k)$ (Proposition~\ref{prop:iff}), or, for several orbitals, compute the smallest eigenvalue of $\Gamma^{\GD}(\mathbf k,i\omega_n)$ over the Brillouin zone. It proves non-causality when it fails, but it is weak: in this survey it flags 54 points, whereas the eight-frequency Pick test flags 481. A block Pick test on a few low Matsubara frequencies, applied for each $\mathbf k$ or to the $\mathbf k$-resolved matrices, is far more sensitive. It is limited only by the precision of the data, because the Pick matrix becomes ill-conditioned as $N_P$ grows~\cite{fei21, fei21b}. Both complement the inspection of low-energy spectra used in Ref.~\cite{zhu21}. When either test fails, no causal self-energy, and hence no causal Green's function, reproduces the Matsubara data; an analytic continuation that matches the data cannot be causal. Where the resulting non-causality appears on the real axis depends on the continuation and is not fixed by the data. Possible remedies include a double counting chosen so that Proposition~\ref{prop:suff} holds, an explicitly causal projection of the data~\cite{han21}, or vertex corrections that make the local and nonlocal damping mutually consistent~\cite{toschi07, rohringer18, ayral15, stepanov19}, for which parquet implementations~\cite{valli15} and the high-frequency asymptotics of the vertex~\cite{wentzell20} are available, though the local vertex can itself become singular~\cite{schaefer13}. Schemes in which the nonlocal self-energy is static, as in the separability ansatz of Ref.~\cite{schaefer15}, avoid the problem altogether, since a static nonlocal part does not enter $\Gamma$.

\paragraph{Limitations.}
We use finite clusters, whose discrete spectra make the low-frequency damping depend on temperature through the thermal broadening of poles. Discreteness also makes the support mismatch of Corollary~\ref{cor:finite} unavoidable. The large number of Pick failures is therefore partly a finite-size effect, and a lattice calculation with continuous spectra may satisfy Eq.~\eqref{eq:global}. The pointwise failures, which require a large local overdamping at low energy, are the more transferable signal. $GW$ is evaluated once, on the exact $G$, rather than self-consistently, and the local self-energy is exact rather than a DMFT impurity self-energy. These choices isolate the splice. They do not show that any particular self-consistent $GW$+DMFT calculation is non-causal. The lattice forms of Propositions~\ref{thm:global} and~\ref{prop:iff} are exact, but we have not applied them here to a lattice $GW$+DMFT calculation. That is the obvious next step.

% ================================================================
\section{Conclusion}
\label{sec:conclusion}

Replacing the local part of a causal $GW$ self-energy by a causal local self-energy, even an exact one, does not guarantee a causal result. The spliced self-energy is causal if and only if its spectral measure, Eq.~\eqref{eq:BGD}, is positive semi-definite, a direct consequence of the Herglotz representation. Its more useful consequences are a momentum-resolved bound for translation-invariant systems, a generic failure mechanism on finite clusters, and Matsubara-axis diagnostics whose limits must be kept in view. In one-shot $GW$ on the exact cluster Green's function with the exact local self-energy inserted, the pointwise test fails at 54 of 528 parameter sets. These failures lie away from half filling, all at $\omega_1$, and appear already at weak coupling; each certifies non-causality. An eight-frequency block Pick test detects non-causality at 481 of the 528 sets, a lower bound for this survey. Moment checks do not detect any of these failures, because the spliced high-frequency moment equals the exact local one. These results concern an idealized splice, not a self-consistent $GW$+DMFT calculation, for which the conditions should be checked directly.

\paragraph{Note on the revision.}
Version~1 of this preprint reported causality violations of $GW$ itself, and of the spliced self-energy at half filling on the three-site ring. These came from a sign error in the polarization of the accompanying code, which produced an unscreened, Stoner-like denominator $1+U|P|\to1-U|P|$. With the sign fixed as described in Sec.~\ref{sec:method} and validated in App.~\ref{app:validation}, $GW$ passes the pointwise and Pick tests at every point, as the perfect-square theorem requires. The present version supersedes those results.

\section*{Data and code availability}
The exact-diagonalization and $GW$ code (\texttt{causality-check.py}), the scan and convergence scripts (\texttt{splice\_scan.py}, \texttt{converge\_splice.py}), the figure script (\texttt{make\_figures.py}), the second implementation and validation scripts (\texttt{indep\_check.py}, \texttt{indep\_pick.py}, \texttt{validation.py}), and the raw data (\texttt{splice\_scan.jsonl}, \texttt{indep\_pick.jsonl}, \texttt{validation.json}) are available at \url{https://doi.org/10.5281/zenodo.23644415}.

% ================================================================
\appendix

\section{Exact high-frequency moment}
\label{app:moments}

Inserting eigenstates $H|m\rangle=E_m|m\rangle$ with Boltzmann weights $p_m$ gives $G_{ij}(z)=\int d\varepsilon\,A_{ij}(\varepsilon)/(z-\varepsilon)$ with
$A_{ij}(\varepsilon)=\sum_{mn}(p_m+p_n)\langle m|c_i|n\rangle\langle n|c^\dagger_j|m\rangle\,\delta(\varepsilon-E_n+E_m)$, which is positive semi-definite. Its moments are anticommutators,
$M_0=I$, $M_1=\langle\{[c_i,H],c^\dagger_j\}\rangle$, $M_2=\langle\{[[c_i,H],H],c^\dagger_j\}\rangle$.
Inverting $G(z)=z^{-1}(I+M_1z^{-1}+M_2z^{-2}+\dots)$ gives $\Sigma_\infty=M_1-h$ and $C=M_2-M_1^2$.
With $[c_{i\sigma},H_U]=Ud_{i\bar\sigma}c_{i\sigma}$, one finds $M_1=h+U\,\mathrm{diag}(\langle d_{i\bar\sigma}\rangle)$. The cross terms in $M_2$ combine to $Uh_{ij}(\langle d_i\rangle+\langle d_j\rangle)$, and $d^2=\tfrac14$ gives $(U^2/4)\delta_{ij}$; the remaining term $\langle[d_{i\bar\sigma},H_t]\rangle$ vanishes by cyclicity of the trace. Subtracting $M_1^2$, the hopping and cross terms cancel identically, leaving
$C_{ij}=U^2(\tfrac14-\langle d_{i\bar\sigma}\rangle^2)\delta_{ij}=U^2\langle n_{i\bar\sigma}\rangle(1-\langle n_{i\bar\sigma}\rangle)\delta_{ij}$, i.e.\ Eq.~\eqref{eq:cex}~\cite{harris67, potthoff97}.
Numerically, $\omega_n\Gamma^{\ex}(i\omega_n)$ at $\omega_n\approx125$-$250$ reproduces Eq.~\eqref{eq:cex} site by site to $1.1\times10^{-3}$, the size of the $O(\omega_n^{-2})$ correction at that frequency.

\section{Causality and tail of $GW$}
\label{app:gw}

In real time the correlation part of Eq.~\eqref{eq:sgw} is a product $G_{ij}(t)\,[W-U]_{ij}(t)$. Its spectral function is
\begin{equation*}
\begin{split}B^{GW}_{ij}(\omega)=\!\int\! d\varepsilon\,d\varepsilon'\,&A_{ij}(\varepsilon)\,\rho_{ij}(\varepsilon')\,\big[1-f(\varepsilon)+b(\varepsilon')\big]\\ &\times\delta(\omega-\varepsilon-\varepsilon'),\end{split}
\end{equation*}
with $\rho$ the spectral function of $W-U$~\cite{strinati88}. The matrix $\rho(\varepsilon')[1-f(\varepsilon)+b(\varepsilon')]$ is positive semi-definite for both signs of $\varepsilon'$, because $\rho(\varepsilon')$ carries the sign of $\varepsilon'$ for a causal bosonic propagator~\cite{uimonen15}. By the Schur product theorem~\cite{horn13}, the element-wise product of two positive semi-definite matrices is positive semi-definite, so $B^{GW}(\omega)\succeq0$. This is the cluster version of the perfect-square argument of Ref.~\cite{stefanucci14}.
For the tail, $G_{ij}(i\omega_n+i\nu_m)=\delta_{ij}/(i\omega_n)+O(\omega_n^{-2})$ uniformly for $|\nu_m|\ll\omega_n$. Contributions from $|\nu_m|\sim\omega_n$ are suppressed by $[W-U](i\nu_m)=O(\nu_m^{-2})$. Together these give Eq.~\eqref{eq:cgw} with vanishing off-diagonal moment.

\section{Numerical details}
\label{app:numerics}

Green's functions are evaluated from the Lehmann representation, with transitions of weight below $10^{-13}$ discarded. They agree with the unpruned sum to $3\times10^{-13}$. Matsubara sums use fermionic and bosonic cutoffs $N_c=16\beta f$ and $M_c=8\beta f$, with $f=1,2$.

\paragraph{Tail correction of the polarization.}
Truncating Eq.~\eqref{eq:P} at $|n|\le N_c$ leaves a real, nearly frequency-independent error of order $\beta/N_c$ in $P(i\nu_m)$, because $G_{ij}(i\omega_n)\to\delta_{ij}/i\omega_n$. The error does not decay with $\nu_m$, so $W-U$ acquires a spurious constant and the high-frequency tail of $\Sigma^{GW}$ is wrong (it even turns negative). We therefore subtract $\delta_{ij}/[(i\omega_n+i\nu_m)i\omega_n]$ inside the window and add back its exact infinite sum: zero for $\nu_m\ne0$, and $-\beta/4$ per spin for $\nu_m=0$. With this correction, all violations change by less than $1.1\times10^{-6}$ between $f=1$ and $f=2$. Without it, the low-frequency violations converge only as $1/N_c$, to the same limits.

\paragraph{Checks.}
With the correction switched off, our $GW$ routine reproduces the original code to $2\times10^{-16}$. The causality test uses the 50 lowest positive Matsubara frequencies, and the moments use $\omega_n=(2n-1)\pi/\beta$ with $n=401$ and $801$. A full scan of the four clusters takes a few minutes on a laptop.

\section{Validation}
\label{app:validation}

\paragraph{Second implementation.}
The second implementation (\texttt{indep\_check.py}) shares no code with the main one. It builds the Fock-space Hamiltonian with its own Jordan-Wigner construction and an interleaved mode order, $(i,\sigma)\to2i+\sigma$. It evaluates $G_{ij}(\tau)=-Z^{-1}\mathrm{Tr}[e^{-(\beta-\tau)H}c_ie^{-\tau H}c^\dagger_j]$ on a composite Gauss-Legendre grid ($20\beta$ panels of 16 nodes), and forms $P_{ij}(\tau)=-2G_{ij}(\tau)G_{ji}(\beta-\tau)$ directly in imaginary time, so the bubble involves no fermionic Matsubara sum. $P(i\nu_m)$ follows by quadrature, $W-U$ is evaluated for $|m|\le300$, and $[W-U](\tau)$ is reconstructed with the analytic $1/\nu^2$ tail. The self-energy is $\Sigma^{GW}(\tau)=-G(\tau)\circ[W-U](\tau)$, Fourier transformed by quadrature, and $\Sigma^{\ex}=i\omega_n-h-G^{-1}$ uses $G(i\omega_n)$ from the same quadrature. Doubling the number of panels and of bosonic frequencies changes the damping eigenvalues by less than $2\times10^{-12}$.

\begin{table}[htbp]
\caption{Validation of the corrected polarization and screened interaction. $\chi$ is the exact density response at $U=0$ (dimer, open 3-chain, 3-site ring; $\beta t=10$; $\nu_0$-$\nu_5$). The screening checks cover all 528 points at $\nu_m$, $m\in\{0,1,2,4,8\}$.}
\label{tab:validation}
\centering
\footnotesize
\setlength{\tabcolsep}{4pt}
\begin{tabular}{@{}lr@{}}
\toprule
Check & result \\
\midrule
$\max|P-\chi|$, tail-corrected bubble & $1.1\times10^{-6}$ \\
$\max|P-\chi|$, uncorrected bubble ($N_c=16\beta$) & $6.3\times10^{-3}$ \\
$\max|P-\chi|$, Version~1 sign & $1.0$ \\
$\max\lambda[P(i\nu_m)]$ & $-2.1\times10^{-5}$ \\
$\min\lambda[1-UP(i\nu_m)]$ & $1.00001$ \\
$\max\lambda[W(0)]/U$ & $0.9997$ \\
$\min\lambda[1+UP(0)]$, Version~1 sign & $-10.4$ \\
$\lambda_{\min}[1+UP(0)]\le0$, Version~1 sign & 410/528 pts \\
\bottomrule
\end{tabular}
\end{table}

\begin{table}[htbp]
\caption{Smallest eigenvalue of $\Gamma^{\GD}(i\omega_1)$ from the main and the second implementation. The first five points fail the pointwise test; the last four pass it. The 3-site ring at $\beta t=10$, $U/t=4$, $\mu=0$ is the headline case of Version~1.}
\label{tab:indep}
\centering
\footnotesize
\setlength{\tabcolsep}{3pt}
\begin{tabular}{@{}lrrrrr@{}}
\toprule
cluster & $\beta t$ & $U/t$ & $\mu/t$ & main & second \\
\midrule
dimer  & 20 & 3 & $2.5$  & $-4.31709\times10^{-2}$ & $-4.31710\times10^{-2}$ \\
3-chain& 10 & 2 & $2.5$  & $-1.59693\times10^{-2}$ & $-1.59693\times10^{-2}$ \\
3-ring & 20 & 4 & $-3$   & $-2.41666\times10^{-2}$ & $-2.41668\times10^{-2}$ \\
4-ring & 20 & 4 & $1.5$  & $-1.54666\times10^{-2}$ & $-1.54667\times10^{-2}$ \\
4-ring & 20 & 3 & $-3$   & $-1.586\times10^{-5}$   & $-1.590\times10^{-5}$ \\
dimer  & 20 & 3 & $1$    & $5.31925\times10^{-2}$  & $5.31925\times10^{-2}$ \\
3-chain& 10 & 2 & $0$    & $2.32459\times10^{-2}$  & $2.32459\times10^{-2}$ \\
3-ring & 10 & 4 & $0$    & $8.57144\times10^{-1}$  & $8.57144\times10^{-1}$ \\
4-ring & 10 & 2 & $-3$   & $2.1048\times10^{-4}$   & $2.1045\times10^{-4}$ \\
\bottomrule
\end{tabular}
\end{table}

\paragraph{Pick test.}
The block Pick matrix~\eqref{eq:pick} is scaled by its diagonal, $\tilde{\mathcal P}=D^{-1/2}\mathcal PD^{-1/2}$ with $D$ the diagonal of $\mathcal P$, which leaves the sign of its eigenvalues unchanged. With the second implementation, the smallest scaled eigenvalue over all points and $N_P\le8$ is $-4\times10^{-12}$ for $\Sigma^{\ex}$ and $-1\times10^{-15}$ for $\Sigma^{GW}$; we count a failure below $-10^{-9}$. With the main implementation the controls show noise up to $2\times10^{-6}$ for $N_P\ge3$, because the Pick matrix is ill-conditioned and $\Sigma^{GW}$ is accurate only to $\sim10^{-6}$. Its $N_P=2$ counts agree with those of Table~\ref{tab:summary}.

\bibliographystyle{unsrtnat}
\bibliography{refs}

\begin{thebibliography}{84}
\providecommand{\natexlab}[1]{#1}
\providecommand{\url}[1]{\texttt{#1}}
\expandafter\ifx\csname urlstyle\endcsname\relax
  \providecommand{\doi}[1]{doi: #1}\else
  \providecommand{\doi}{doi: \begingroup \urlstyle{rm}\Url}\fi

\bibitem[Hedin(1965)]{hedin65}
L.~Hedin.
\newblock New method for calculating the one-particle {G}reen's function with application to the electron-gas problem.
\newblock \emph{Phys. Rev.}, 139:\penalty0 A796, 1965.
\newblock \doi{10.1103/PhysRev.139.A796}.

\bibitem[Strinati et~al.(1980)Strinati, Mattausch, and Hanke]{strinati80}
G.~Strinati, H.~J. Mattausch, and W.~Hanke.
\newblock Dynamical correlation effects on the quasiparticle {B}loch states of a covalent crystal.
\newblock \emph{Phys. Rev. Lett.}, 45:\penalty0 290, 1980.
\newblock \doi{10.1103/PhysRevLett.45.290}.

\bibitem[Strinati et~al.(1982)Strinati, Mattausch, and Hanke]{strinati82}
G.~Strinati, H.~J. Mattausch, and W.~Hanke.
\newblock Dynamical aspects of correlation corrections in a covalent crystal.
\newblock \emph{Phys. Rev. B}, 25:\penalty0 2867, 1982.
\newblock \doi{10.1103/PhysRevB.25.2867}.

\bibitem[Hybertsen and Louie(1986)]{hybertsen86}
M.~S. Hybertsen and S.~G. Louie.
\newblock Electron correlation in semiconductors and insulators: Band gaps and quasiparticle energies.
\newblock \emph{Phys. Rev. B}, 34:\penalty0 5390, 1986.
\newblock \doi{10.1103/PhysRevB.34.5390}.

\bibitem[Strinati(1988)]{strinati88}
G.~Strinati.
\newblock Application of the {G}reen's functions method to the study of the optical properties of semiconductors.
\newblock \emph{Riv. Nuovo Cimento}, 11:\penalty0 1, 1988.
\newblock \doi{10.1007/BF02725962}.

\bibitem[Aryasetiawan and Gunnarsson(1998)]{aryasetiawan98}
F.~Aryasetiawan and O.~Gunnarsson.
\newblock The {GW} method.
\newblock \emph{Rep. Prog. Phys.}, 61:\penalty0 237, 1998.

\bibitem[Onida et~al.(2002)Onida, Reining, and Rubio]{onida02}
G.~Onida, L.~Reining, and A.~Rubio.
\newblock Electronic excitations: density-functional versus many-body {G}reen's-function approaches.
\newblock \emph{Rev. Mod. Phys.}, 74:\penalty0 601, 2002.
\newblock \doi{10.1103/RevModPhys.74.601}.

\bibitem[Reining(2018)]{reining18}
L.~Reining.
\newblock The {GW} approximation: content, successes and limitations.
\newblock \emph{WIREs Comput. Mol. Sci.}, 8:\penalty0 e1344, 2018.

\bibitem[Golze et~al.(2019)Golze, Dvorak, and Rinke]{golze19}
D.~Golze, M.~Dvorak, and P.~Rinke.
\newblock The {GW} compendium: A practical guide to theoretical photoemission spectroscopy.
\newblock \emph{Front. Chem.}, 7:\penalty0 377, 2019.

\bibitem[Metzner and Vollhardt(1989)]{metzner89}
W.~Metzner and D.~Vollhardt.
\newblock Correlated lattice fermions in $d=\infty$ dimensions.
\newblock \emph{Phys. Rev. Lett.}, 62:\penalty0 324, 1989.
\newblock \doi{10.1103/PhysRevLett.62.324}.

\bibitem[Georges and Kotliar(1992)]{georges92}
A.~Georges and G.~Kotliar.
\newblock {H}ubbard model in infinite dimensions.
\newblock \emph{Phys. Rev. B}, 45:\penalty0 6479, 1992.
\newblock \doi{10.1103/PhysRevB.45.6479}.

\bibitem[Georges et~al.(1996)Georges, Kotliar, Krauth, and Rozenberg]{georges96}
A.~Georges, G.~Kotliar, W.~Krauth, and M.~J. Rozenberg.
\newblock Dynamical mean-field theory of strongly correlated fermion systems and the limit of infinite dimensions.
\newblock \emph{Rev. Mod. Phys.}, 68:\penalty0 13, 1996.
\newblock \doi{10.1103/RevModPhys.68.13}.

\bibitem[Kotliar et~al.(2006)Kotliar, Savrasov, Haule, Oudovenko, Parcollet, and Marianetti]{kotliar06}
G.~Kotliar, S.~Y. Savrasov, K.~Haule, V.~S. Oudovenko, O.~Parcollet, and C.~A. Marianetti.
\newblock Electronic structure calculations with dynamical mean-field theory.
\newblock \emph{Rev. Mod. Phys.}, 78:\penalty0 865, 2006.
\newblock \doi{10.1103/RevModPhys.78.865}.

\bibitem[Held(2007)]{held07}
K.~Held.
\newblock Electronic structure calculations using dynamical mean field theory.
\newblock \emph{Adv. Phys.}, 56:\penalty0 829, 2007.

\bibitem[Strinati(1984)]{strinati84}
G.~Strinati.
\newblock Effects of dynamical screening on resonances at inner-shell thresholds in semiconductors.
\newblock \emph{Phys. Rev. B}, 29:\penalty0 5718, 1984.
\newblock \doi{10.1103/PhysRevB.29.5718}.

\bibitem[Aryasetiawan et~al.(2004)Aryasetiawan, Imada, Georges, Kotliar, Biermann, and Lichtenstein]{aryasetiawan04}
F.~Aryasetiawan, M.~Imada, A.~Georges, G.~Kotliar, S.~Biermann, and A.~I. Lichtenstein.
\newblock Frequency-dependent local interactions and low-energy effective models from electronic structure calculations.
\newblock \emph{Phys. Rev. B}, 70:\penalty0 195104, 2004.
\newblock \doi{10.1103/PhysRevB.70.195104}.

\bibitem[Sun and Kotliar(2002)]{sun02}
P.~Sun and G.~Kotliar.
\newblock Extended dynamical mean-field theory and {GW} method.
\newblock \emph{Phys. Rev. B}, 66:\penalty0 085120, 2002.
\newblock \doi{10.1103/PhysRevB.66.085120}.

\bibitem[Biermann et~al.(2003)Biermann, Aryasetiawan, and Georges]{biermann03}
S.~Biermann, F.~Aryasetiawan, and A.~Georges.
\newblock First-principles approach to the electronic structure of strongly correlated systems: Combining the {GW} approximation and dynamical mean-field theory.
\newblock \emph{Phys. Rev. Lett.}, 90:\penalty0 086402, 2003.
\newblock \doi{10.1103/PhysRevLett.90.086402}.

\bibitem[Sun and Kotliar(2004)]{sun04}
P.~Sun and G.~Kotliar.
\newblock Many-body approximation scheme beyond {GW}.
\newblock \emph{Phys. Rev. Lett.}, 92:\penalty0 196402, 2004.
\newblock \doi{10.1103/PhysRevLett.92.196402}.

\bibitem[Ayral et~al.(2012)Ayral, Werner, and Biermann]{ayral12}
T.~Ayral, P.~Werner, and S.~Biermann.
\newblock Spectral properties of correlated materials: Local vertex and nonlocal two-particle correlations from combined {GW} and dynamical mean field theory.
\newblock \emph{Phys. Rev. Lett.}, 109:\penalty0 226401, 2012.
\newblock \doi{10.1103/PhysRevLett.109.226401}.

\bibitem[Ayral et~al.(2013)Ayral, Biermann, and Werner]{ayral13}
T.~Ayral, S.~Biermann, and P.~Werner.
\newblock Screening and nonlocal correlations in the extended {H}ubbard model from self-consistent combined {GW} and dynamical mean field theory.
\newblock \emph{Phys. Rev. B}, 87:\penalty0 125149, 2013.
\newblock \doi{10.1103/PhysRevB.87.125149}.

\bibitem[Tomczak et~al.(2012)Tomczak, Casula, Miyake, Aryasetiawan, and Biermann]{tomczak12}
J.~M. Tomczak, M.~Casula, T.~Miyake, F.~Aryasetiawan, and S.~Biermann.
\newblock Combined {GW} and dynamical mean-field theory: Dynamical screening effects in transition metal oxides.
\newblock \emph{EPL}, 100:\penalty0 67001, 2012.

\bibitem[Biermann(2014)]{biermann14}
S.~Biermann.
\newblock Dynamical screening effects in correlated electron materials-a progress report on combined many-body perturbation and dynamical mean field theory: `{GW}+{DMFT}'.
\newblock \emph{J. Phys.: Condens. Matter}, 26:\penalty0 173202, 2014.

\bibitem[Huang et~al.(2014)Huang, Ayral, Biermann, and Werner]{huang14}
L.~Huang, T.~Ayral, S.~Biermann, and P.~Werner.
\newblock Extended dynamical mean-field study of the {H}ubbard model with long-range interactions.
\newblock \emph{Phys. Rev. B}, 90:\penalty0 195114, 2014.
\newblock \doi{10.1103/PhysRevB.90.195114}.

\bibitem[van Roekeghem et~al.(2014)van Roekeghem, Ayral, Tomczak, Casula, Xu, Ding, Ferrero, Parcollet, Jiang, and Biermann]{vanroekeghem14}
A.~van Roekeghem, T.~Ayral, J.~M. Tomczak, M.~Casula, N.~Xu, H.~Ding, M.~Ferrero, O.~Parcollet, H.~Jiang, and S.~Biermann.
\newblock Dynamical correlations and screened exchange on the experimental bench: Spectral properties of the cobalt pnictide {BaCo$_2$As$_2$}.
\newblock \emph{Phys. Rev. Lett.}, 113:\penalty0 266403, 2014.
\newblock \doi{10.1103/PhysRevLett.113.266403}.

\bibitem[Tomczak(2015)]{tomczak15}
J.~M. Tomczak.
\newblock {QSGW+DMFT}: an electronic structure scheme for the iron pnictides and beyond, 2015.
\newblock arXiv:1411.5180.

\bibitem[Choi et~al.(2016)Choi, Kutepov, Haule, van Schilfgaarde, and Kotliar]{choi16}
S.~Choi, A.~Kutepov, K.~Haule, M.~van Schilfgaarde, and G.~Kotliar.
\newblock First-principles treatment of {M}ott insulators: linearized {QSGW+DMFT} approach.
\newblock \emph{npj Quantum Mater.}, 1:\penalty0 16001, 2016.

\bibitem[Boehnke et~al.(2016)Boehnke, Nilsson, Aryasetiawan, and Werner]{boehnke16}
L.~Boehnke, F.~Nilsson, F.~Aryasetiawan, and P.~Werner.
\newblock When strong correlations become weak: Consistent merging of {GW} and {DMFT}.
\newblock \emph{Phys. Rev. B}, 94:\penalty0 201106(R), 2016.
\newblock \doi{10.1103/PhysRevB.94.201106}.

\bibitem[Nilsson et~al.(2017)Nilsson, Boehnke, Werner, and Aryasetiawan]{nilsson17}
F.~Nilsson, L.~Boehnke, P.~Werner, and F.~Aryasetiawan.
\newblock Multitier self-consistent {GW}+{EDMFT}.
\newblock \emph{Phys. Rev. Mater.}, 1:\penalty0 043803, 2017.
\newblock \doi{10.1103/PhysRevMaterials.1.043803}.

\bibitem[Gole\v{z} et~al.(2017)Gole\v{z}, Boehnke, Strand, Eckstein, and Werner]{golez17}
D.~Gole\v{z}, L.~Boehnke, H.~U.~R. Strand, M.~Eckstein, and P.~Werner.
\newblock Nonequilibrium {GW}+{EDMFT}: Antiscreening and inverted populations from nonlocal correlations.
\newblock \emph{Phys. Rev. Lett.}, 118:\penalty0 246402, 2017.
\newblock \doi{10.1103/PhysRevLett.118.246402}.

\bibitem[Choi et~al.(2019)Choi, Semon, Kang, Kutepov, and Kotliar]{choi19}
S.~Choi, P.~Semon, B.~Kang, A.~Kutepov, and G.~Kotliar.
\newblock {ComDMFT}: A massively parallel computer package for the electronic structure of correlated-electron systems.
\newblock \emph{Comput. Phys. Commun.}, 244:\penalty0 277, 2019.

\bibitem[Zhu and Chan(2021)]{zhu21}
T.~Zhu and G.~K.-L. Chan.
\newblock Ab initio full cell {GW}+{DMFT} for correlated materials.
\newblock \emph{Phys. Rev. X}, 11:\penalty0 021006, 2021.
\newblock \doi{10.1103/PhysRevX.11.021006}.

\bibitem[Zgid and Gull(2017)]{zgid17}
D.~Zgid and E.~Gull.
\newblock Finite temperature quantum embedding theories for correlated systems.
\newblock \emph{New J. Phys.}, 19:\penalty0 023047, 2017.

\bibitem[Lan et~al.(2017)Lan, Shee, Li, Gull, and Zgid]{lan17}
T.~N. Lan, A.~Shee, J.~Li, E.~Gull, and D.~Zgid.
\newblock Testing self-energy embedding theory in combination with {GW}.
\newblock \emph{Phys. Rev. B}, 96:\penalty0 155106, 2017.
\newblock \doi{10.1103/PhysRevB.96.155106}.

\bibitem[Iskakov et~al.(2020)Iskakov, Yeh, Gull, and Zgid]{iskakov20}
S.~Iskakov, C.-N. Yeh, E.~Gull, and D.~Zgid.
\newblock Ab initio self-energy embedding for the photoemission spectra of {NiO} and {MnO}.
\newblock \emph{Phys. Rev. B}, 102:\penalty0 085105, 2020.
\newblock \doi{10.1103/PhysRevB.102.085105}.

\bibitem[Toschi et~al.(2007)Toschi, Katanin, and Held]{toschi07}
A.~Toschi, A.~A. Katanin, and K.~Held.
\newblock Dynamical vertex approximation: A step beyond dynamical mean-field theory.
\newblock \emph{Phys. Rev. B}, 75:\penalty0 045118, 2007.
\newblock \doi{10.1103/PhysRevB.75.045118}.

\bibitem[Katanin et~al.(2009)Katanin, Toschi, and Held]{katanin09}
A.~A. Katanin, A.~Toschi, and K.~Held.
\newblock Comparing pertinent effects of antiferromagnetic fluctuations in the two- and three-dimensional {H}ubbard model.
\newblock \emph{Phys. Rev. B}, 80:\penalty0 075104, 2009.
\newblock \doi{10.1103/PhysRevB.80.075104}.

\bibitem[Rubtsov et~al.(2008)Rubtsov, Katsnelson, and Lichtenstein]{rubtsov08}
A.~N. Rubtsov, M.~I. Katsnelson, and A.~I. Lichtenstein.
\newblock Dual fermion approach to nonlocal correlations in the {H}ubbard model.
\newblock \emph{Phys. Rev. B}, 77:\penalty0 033101, 2008.
\newblock \doi{10.1103/PhysRevB.77.033101}.

\bibitem[Rubtsov et~al.(2012)Rubtsov, Katsnelson, and Lichtenstein]{rubtsov12}
A.~N. Rubtsov, M.~I. Katsnelson, and A.~I. Lichtenstein.
\newblock Dual boson approach to collective excitations in correlated fermionic systems.
\newblock \emph{Ann. Phys.}, 327:\penalty0 1320, 2012.

\bibitem[Ayral and Parcollet(2015)]{ayral15}
T.~Ayral and O.~Parcollet.
\newblock Mott physics and spin fluctuations: A unified framework.
\newblock \emph{Phys. Rev. B}, 92:\penalty0 115109, 2015.
\newblock \doi{10.1103/PhysRevB.92.115109}.

\bibitem[Stepanov et~al.(2019)Stepanov, Harkov, and Lichtenstein]{stepanov19}
E.~A. Stepanov, V.~Harkov, and A.~I. Lichtenstein.
\newblock Consistent partial bosonization of the extended {H}ubbard model.
\newblock \emph{Phys. Rev. B}, 100:\penalty0 205115, 2019.
\newblock \doi{10.1103/PhysRevB.100.205115}.

\bibitem[Rohringer et~al.(2018)Rohringer, Hafermann, Toschi, Katanin, Antipov, Katsnelson, Lichtenstein, Rubtsov, and Held]{rohringer18}
G.~Rohringer, H.~Hafermann, A.~Toschi, A.~A. Katanin, A.~E. Antipov, M.~I. Katsnelson, A.~I. Lichtenstein, A.~N. Rubtsov, and K.~Held.
\newblock Diagrammatic routes to nonlocal correlations beyond dynamical mean field theory.
\newblock \emph{Rev. Mod. Phys.}, 90:\penalty0 025003, 2018.
\newblock \doi{10.1103/RevModPhys.90.025003}.

\bibitem[Luttinger(1961)]{luttinger61}
J.~M. Luttinger.
\newblock Analytic properties of single-particle propagators for many-fermion systems.
\newblock \emph{Phys. Rev.}, 121:\penalty0 942, 1961.
\newblock \doi{10.1103/PhysRev.121.942}.

\bibitem[Gesztesy and Tsekanovskii(2000)]{gesztesy00}
F.~Gesztesy and E.~Tsekanovskii.
\newblock On matrix-valued {H}erglotz functions.
\newblock \emph{Math. Nachr.}, 218:\penalty0 61, 2000.

\bibitem[Stefanucci et~al.(2014)Stefanucci, Pavlyukh, Uimonen, and van Leeuwen]{stefanucci14}
G.~Stefanucci, Y.~Pavlyukh, A.-M. Uimonen, and R.~van Leeuwen.
\newblock Diagrammatic expansion for positive spectral functions beyond {GW}: Application to vertex corrections in the electron gas.
\newblock \emph{Phys. Rev. B}, 90:\penalty0 115134, 2014.
\newblock \doi{10.1103/PhysRevB.90.115134}.

\bibitem[Uimonen et~al.(2015)Uimonen, Stefanucci, Pavlyukh, and van Leeuwen]{uimonen15}
A.-M. Uimonen, G.~Stefanucci, Y.~Pavlyukh, and R.~van Leeuwen.
\newblock Diagrammatic expansion for positive density-response spectra: Application to the electron gas, 2015.
\newblock arXiv:1503.02158.

\bibitem[Baym and Kadanoff(1961)]{baym61}
G.~Baym and L.~P. Kadanoff.
\newblock Conservation laws and correlation functions.
\newblock \emph{Phys. Rev.}, 124:\penalty0 287, 1961.
\newblock \doi{10.1103/PhysRev.124.287}.

\bibitem[Baym(1962)]{baym62}
G.~Baym.
\newblock Self-consistent approximations in many-body systems.
\newblock \emph{Phys. Rev.}, 127:\penalty0 1391, 1962.
\newblock \doi{10.1103/PhysRev.127.1391}.

\bibitem[Almbladh et~al.(1999)Almbladh, von Barth, and van Leeuwen]{almbladh99}
C.-O. Almbladh, U.~von Barth, and R.~van Leeuwen.
\newblock Variational total energies from {$\Phi$}- and {$\Psi$}-derivable theories.
\newblock \emph{Int. J. Mod. Phys. B}, 13:\penalty0 535, 1999.

\bibitem[Lee and Haule(2017)]{lee17}
J.~Lee and K.~Haule.
\newblock Diatomic molecule as a testbed for combining {DMFT} with electronic structure methods such as {GW} and {DFT}.
\newblock \emph{Phys. Rev. B}, 95:\penalty0 155104, 2017.
\newblock \doi{10.1103/PhysRevB.95.155104}.

\bibitem[Lee(2017)]{leethesis17}
J.~Lee.
\newblock \emph{Methodological study of computational approaches to address the problem of strong correlations}.
\newblock PhD thesis, Rutgers, The State University of New Jersey, 2017.

\bibitem[Chen et~al.(2022)Chen, Petocchi, and Werner]{chen22}
J.~Chen, F.~Petocchi, and P.~Werner.
\newblock Causal versus local {GW}+{EDMFT} scheme and application to the triangular-lattice extended {H}ubbard model.
\newblock \emph{Phys. Rev. B}, 105:\penalty0 085102, 2022.
\newblock \doi{10.1103/PhysRevB.105.085102}.

\bibitem[Backes et~al.(2022)Backes, Sim, and Biermann]{backes22}
S.~Backes, J.-H. Sim, and S.~Biermann.
\newblock Nonlocal correlation effects in fermionic many-body systems: Overcoming the noncausality problem.
\newblock \emph{Phys. Rev. B}, 105:\penalty0 245115, 2022.
\newblock \doi{10.1103/PhysRevB.105.245115}.

\bibitem[Hettler et~al.(1998)Hettler, Tahvildar-Zadeh, Jarrell, Pruschke, and Krishnamurthy]{hettler98}
M.~H. Hettler, A.~N. Tahvildar-Zadeh, M.~Jarrell, T.~Pruschke, and H.~R. Krishnamurthy.
\newblock Nonlocal dynamical correlations of strongly interacting electron systems.
\newblock \emph{Phys. Rev. B}, 58:\penalty0 R7475, 1998.
\newblock \doi{10.1103/PhysRevB.58.R7475}.

\bibitem[Lichtenstein and Katsnelson(2000)]{lichtenstein00}
A.~I. Lichtenstein and M.~I. Katsnelson.
\newblock Antiferromagnetism and d-wave superconductivity in cuprates: A cluster dynamical mean-field theory.
\newblock \emph{Phys. Rev. B}, 62:\penalty0 R9283, 2000.
\newblock \doi{10.1103/PhysRevB.62.R9283}.

\bibitem[Kotliar et~al.(2001)Kotliar, Savrasov, P\'alsson, and Biroli]{kotliar01}
G.~Kotliar, S.~Y. Savrasov, G.~P\'alsson, and G.~Biroli.
\newblock Cellular approach to the {H}ubbard model.
\newblock \emph{Phys. Rev. Lett.}, 87:\penalty0 186401, 2001.
\newblock \doi{10.1103/PhysRevLett.87.186401}.

\bibitem[Hettler et~al.(2000)Hettler, Mukherjee, Jarrell, and Krishnamurthy]{hettler00}
M.~H. Hettler, M.~Mukherjee, M.~Jarrell, and H.~R. Krishnamurthy.
\newblock Dynamical cluster approximation: Nonlocal dynamics of correlated electron systems.
\newblock \emph{Phys. Rev. B}, 61:\penalty0 12739, 2000.
\newblock \doi{10.1103/PhysRevB.61.12739}.

\bibitem[Biroli and Kotliar(2002)]{biroli02}
G.~Biroli and G.~Kotliar.
\newblock Cluster methods for strongly correlated electron systems.
\newblock \emph{Phys. Rev. B}, 65:\penalty0 155112, 2002.
\newblock \doi{10.1103/PhysRevB.65.155112}.

\bibitem[Biroli et~al.(2004)Biroli, Parcollet, and Kotliar]{biroli04}
G.~Biroli, O.~Parcollet, and G.~Kotliar.
\newblock Cluster dynamical mean-field theories: Causality and classical limit.
\newblock \emph{Phys. Rev. B}, 69:\penalty0 205108, 2004.
\newblock \doi{10.1103/PhysRevB.69.205108}.

\bibitem[Maier et~al.(2005)Maier, Jarrell, Pruschke, and Hettler]{maier05}
T.~Maier, M.~Jarrell, T.~Pruschke, and M.~H. Hettler.
\newblock Quantum cluster theories.
\newblock \emph{Rev. Mod. Phys.}, 77:\penalty0 1027, 2005.
\newblock \doi{10.1103/RevModPhys.77.1027}.

\bibitem[Stanescu and Kotliar(2006)]{stanescu06}
T.~D. Stanescu and G.~Kotliar.
\newblock Fermi arcs and hidden zeros of the {G}reen function in the pseudogap state.
\newblock \emph{Phys. Rev. B}, 74:\penalty0 125110, 2006.
\newblock \doi{10.1103/PhysRevB.74.125110}.

\bibitem[Pelz et~al.(2026)Pelz, von Delft, and Gleis]{pelz26}
M.~Pelz, J.~von Delft, and A.~Gleis.
\newblock Liouvillian interpolation of the self-energy of cluster dynamical mean-field theories, 2026.
\newblock arXiv:2602.16351.

\bibitem[Hubbard(1963)]{hubbard63}
J.~Hubbard.
\newblock Electron correlations in narrow energy bands.
\newblock \emph{Proc. R. Soc. Lond. A}, 276:\penalty0 238, 1963.

\bibitem[Carrascal et~al.(2015)Carrascal, Ferrer, Smith, and Burke]{carrascal15}
D.~J. Carrascal, J.~Ferrer, J.~C. Smith, and K.~Burke.
\newblock The {H}ubbard dimer: a density functional case study of a many-body problem.
\newblock \emph{J. Phys.: Condens. Matter}, 27:\penalty0 393001, 2015.

\bibitem[Romaniello et~al.(2009)Romaniello, Guyot, and Reining]{romaniello09}
P.~Romaniello, S.~Guyot, and L.~Reining.
\newblock The self-energy beyond {GW}: Local and nonlocal vertex corrections.
\newblock \emph{J. Chem. Phys.}, 131:\penalty0 154111, 2009.

\bibitem[Romaniello et~al.(2012)Romaniello, Bechstedt, and Reining]{romaniello12}
P.~Romaniello, F.~Bechstedt, and L.~Reining.
\newblock Beyond the {GW} approximation: Combining correlation channels.
\newblock \emph{Phys. Rev. B}, 85:\penalty0 155131, 2012.
\newblock \doi{10.1103/PhysRevB.85.155131}.

\bibitem[Holm and von Barth(1998)]{holm98}
B.~Holm and U.~von Barth.
\newblock Fully self-consistent {GW} self-energy of the electron gas.
\newblock \emph{Phys. Rev. B}, 57:\penalty0 2108, 1998.
\newblock \doi{10.1103/PhysRevB.57.2108}.

\bibitem[van Schilfgaarde et~al.(2006)van Schilfgaarde, Kotani, and Faleev]{vanschilfgaarde06}
M.~van Schilfgaarde, T.~Kotani, and S.~Faleev.
\newblock Quasiparticle self-consistent {GW} theory.
\newblock \emph{Phys. Rev. Lett.}, 96:\penalty0 226402, 2006.
\newblock \doi{10.1103/PhysRevLett.96.226402}.

\bibitem[Shishkin et~al.(2007)Shishkin, Marsman, and Kresse]{shishkin07}
M.~Shishkin, M.~Marsman, and G.~Kresse.
\newblock Accurate quasiparticle spectra from self-consistent {GW} calculations with vertex corrections.
\newblock \emph{Phys. Rev. Lett.}, 99:\penalty0 246403, 2007.
\newblock \doi{10.1103/PhysRevLett.99.246403}.

\bibitem[Stefanucci and van Leeuwen(2013)]{stefanucci13}
G.~Stefanucci and R.~van Leeuwen.
\newblock \emph{Nonequilibrium Many-Body Theory of Quantum Systems: A Modern Introduction}.
\newblock Cambridge University Press, Cambridge, 2013.

\bibitem[Srivastava(2021)]{srivastava21}
A.~Srivastava.
\newblock Causality and passivity: From electromagnetism and network theory to metamaterials.
\newblock \emph{Mech. Mater.}, 154:\penalty0 103710, 2021.
\newblock arXiv:2008.05546.

\bibitem[Fei et~al.(2021{\natexlab{a}})Fei, Yeh, and Gull]{fei21}
J.~Fei, C.-N. Yeh, and E.~Gull.
\newblock Nevanlinna analytical continuation.
\newblock \emph{Phys. Rev. Lett.}, 126:\penalty0 056402, 2021{\natexlab{a}}.
\newblock \doi{10.1103/PhysRevLett.126.056402}.

\bibitem[Fei et~al.(2021{\natexlab{b}})Fei, Yeh, Zgid, and Gull]{fei21b}
J.~Fei, C.-N. Yeh, D.~Zgid, and E.~Gull.
\newblock Analytical continuation of matrix-valued functions: {C}arath\'eodory formalism.
\newblock \emph{Phys. Rev. B}, 104:\penalty0 165111, 2021{\natexlab{b}}.
\newblock \doi{10.1103/PhysRevB.104.165111}.

\bibitem[Vidberg and Serene(1977)]{vidberg77}
H.~J. Vidberg and J.~W. Serene.
\newblock Solving the {E}liashberg equations by means of {N}-point {P}ad\'e approximants.
\newblock \emph{J. Low Temp. Phys.}, 29:\penalty0 179, 1977.

\bibitem[Jarrell and Gubernatis(1996)]{jarrell96}
M.~Jarrell and J.~E. Gubernatis.
\newblock Bayesian inference and the analytic continuation of imaginary-time quantum {M}onte {C}arlo data.
\newblock \emph{Phys. Rep.}, 269:\penalty0 133, 1996.

\bibitem[Harris and Lange(1967)]{harris67}
A.~B. Harris and R.~V. Lange.
\newblock Single-particle excitations in narrow energy bands.
\newblock \emph{Phys. Rev.}, 157:\penalty0 295, 1967.
\newblock \doi{10.1103/PhysRev.157.295}.

\bibitem[Potthoff et~al.(1997)Potthoff, Wegner, and Nolting]{potthoff97}
M.~Potthoff, T.~Wegner, and W.~Nolting.
\newblock Interpolating self-energy of the infinite-dimensional {H}ubbard model: Modifying the iterative perturbation theory.
\newblock \emph{Phys. Rev. B}, 55:\penalty0 16132, 1997.
\newblock \doi{10.1103/PhysRevB.55.16132}.

\bibitem[Fossati and Stepanov(2026)]{fossati26}
F.~Fossati and E.~A. Stepanov.
\newblock Dual-space cluster-diagrammatic approach to nonlocal electronic correlations, 2026.
\newblock Phys. Rev. B (2026), doi:10.1103/dhm8-5ss6; arXiv:2507.06015.

\bibitem[Han and Choi(2021)]{han21}
M.~Han and H.~J. Choi.
\newblock Causal optimization method for imaginary-time {G}reen's functions in interacting electron systems.
\newblock \emph{Phys. Rev. B}, 104:\penalty0 115112, 2021.
\newblock \doi{10.1103/PhysRevB.104.115112}.

\bibitem[Valli et~al.(2015)Valli, Sch\"afer, Thunstr\"om, Rohringer, Andergassen, Sangiovanni, Held, and Toschi]{valli15}
A.~Valli, T.~Sch\"afer, P.~Thunstr\"om, G.~Rohringer, S.~Andergassen, G.~Sangiovanni, K.~Held, and A.~Toschi.
\newblock Dynamical vertex approximation in its parquet implementation: Application to {H}ubbard nanorings.
\newblock \emph{Phys. Rev. B}, 91:\penalty0 115115, 2015.
\newblock \doi{10.1103/PhysRevB.91.115115}.

\bibitem[Wentzell et~al.(2020)Wentzell, Li, Tagliavini, Taranto, Rohringer, Held, Toschi, and Andergassen]{wentzell20}
N.~Wentzell, G.~Li, A.~Tagliavini, C.~Taranto, G.~Rohringer, K.~Held, A.~Toschi, and S.~Andergassen.
\newblock High-frequency asymptotics of the vertex function: Diagrammatic parametrization and algorithmic implementation.
\newblock \emph{Phys. Rev. B}, 102:\penalty0 085106, 2020.
\newblock \doi{10.1103/PhysRevB.102.085106}.

\bibitem[Sch\"afer et~al.(2013)Sch\"afer, Rohringer, Gunnarsson, Ciuchi, Sangiovanni, and Toschi]{schaefer13}
T.~Sch\"afer, G.~Rohringer, O.~Gunnarsson, S.~Ciuchi, G.~Sangiovanni, and A.~Toschi.
\newblock Divergent precursors of the {M}ott-{H}ubbard transition at the two-particle level.
\newblock \emph{Phys. Rev. Lett.}, 110:\penalty0 246405, 2013.
\newblock \doi{10.1103/PhysRevLett.110.246405}.

\bibitem[Sch\"afer et~al.(2015)Sch\"afer, Toschi, and Tomczak]{schaefer15}
T.~Sch\"afer, A.~Toschi, and J.~M. Tomczak.
\newblock Separability of dynamical and nonlocal correlations in three dimensions.
\newblock \emph{Phys. Rev. B}, 91:\penalty0 121107(R), 2015.
\newblock \doi{10.1103/PhysRevB.91.121107}.

\bibitem[Horn and Johnson(2013)]{horn13}
R.~A. Horn and C.~R. Johnson.
\newblock \emph{Matrix Analysis}.
\newblock Cambridge University Press, Cambridge, 2nd edition, 2013.

\end{thebibliography}

\end{document}